   \newcommand\SkipToFmtEnd{}%
   \newcommand\EndFmtInput{}%
   \long\def\SkipToFmtEnd#1\EndFmtInput{}%
\newcommand\ReadOnlyOnce[1]{\@ifundefined{#1}{\@namedef{#1}{}}\SkipToFmtEnd}
\DeclareFontFamily{OT1}{cmtex}{}
\DeclareFontShape{OT1}{cmtex}{m}{n}
  {<5><6><7><8>cmtex8
   <9>cmtex9
   <10><10.95><12><14.4><17.28><20.74><24.88>cmtex10}{}
\DeclareFontShape{OT1}{cmtex}{m}{it}
  {<-> ssub * cmtt/m/it}{}
\DeclareFontShape{OT1}{cmtt}{bx}{n}
  {<5><6><7><8>cmtt8
   <9>cmbtt9
   <10><10.95><12><14.4><17.28><20.74><24.88>cmbtt10}{}
\DeclareFontShape{OT1}{cmtex}{bx}{n}
  {<-> ssub * cmtt/bx/n}{}
\newcommand{\Conid}[1]{\mathit{#1}}
\newcommand{\Varid}[1]{\mathit{#1}}
\newcommand{\anonymous}{\kern0.06em \vbox{\hrule\@width.5em}}
\renewcommand{\leq}{\leqslant}
\newdimen\mathindent\mathindent\leftmargini}%
\def\resethooks{%
  \global\let\SaveRestoreHook\empty
  \global\let\ColumnHook\empty}
\newcommand*{\savecolumns}[1][default]%
  {\g@addto@macro\SaveRestoreHook{\savecolumns[#1]}}
\newcommand*{\restorecolumns}[1][default]%
  {\g@addto@macro\SaveRestoreHook{\restorecolumns[#1]}}
\newcommand*{\aligncolumn}[2]%
  {\g@addto@macro\ColumnHook{\column{#1}{#2}}}
\newcommand{\onelinecommentchars}{\quad-{}- }
\newcommand{\commentbeginchars}{\enskip\{-}
\newcommand{\commentendchars}{-\}\enskip}
\newcommand{\visiblecomments}{%
  \let\onelinecomment=\onelinecommentchars
  \let\commentbegin=\commentbeginchars
  \let\commentend=\commentendchars}
\newcommand{\invisiblecomments}{%
  \let\onelinecomment=\empty
  \let\commentbegin=\empty
  \let\commentend=\empty}
\newlength{\blanklineskip}
\newcommand{\hsindent}[1]{\quad}
\let\hspre\empty
\let\hspost\empty
\newcommand{\hsnewpar}[1]%
  {{\parskip=0pt\parindent=0pt\par\vskip #1\noindent}}
\newcommand{\hscodestyle}{}
\newcommand{\sethscode}[1]%
  {\expandafter\let\expandafter\hscode\csname #1\endcsname
   \expandafter\let\expandafter\endhscode\csname end#1\endcsname}
   \let\hspre\(\let\hspost\)%
   \let\hspre\(\let\hspost\)%
\newcommand{\plainhs}{\sethscode{plainhscode}}
\def\codeframewidth{\arrayrulewidth}
   \let\endoflinesave=\\
   \framedhslinecorrect\endoflinesave{.5ex}\hline
\newcommand{\framedhslinecorrect}[2]%
  {#1[#2]}
\def\column##1##2{}%
   \newcommand\>[1][]{}\newcommand\<[1][]{}\newcommand\\[1][]{}%
   \def\fromto##1##2##3{##3}%
\let\orighscode=\hscode
   \let\origendhscode=\endhscode
   \def\endhscode{\def\hscode{\endgroup\def\@currenvir{hscode}\\}\begingroup}
\def\hscode{\endgroup\def\@currenvir{hscode}}}%
   \global\let\hscode=\orighscode
   \global\let\endhscode=\origendhscode}%
\newtheorem{thm}{Theorem}[section]
\newtheorem{lemma}[thm]{Lemma}
\newtheorem{definition}[thm]{Definition}
\newtheorem{corollary}[thm]{Corollary}
\newcolumntype{P}{@{}>{\let\pcr=\\\begin{center}\let\\=\pcr\(}p{\linewidth}<{\)\end{center}}@{}}
\title{Properties of Exercise Strategies}
\author{
Alex Gerdes \qquad Bastiaan Heeren
\institute{School of Computer Science\\
Open Universiteit Nederland\\
Heerlen, The Netherlands}
\email{age@ou.nl \qquad \qquad bhr@ou.nl}
\and
Johan Jeuring
\institute{Department of Information and Computing Sciences\\
Utrecht Univeristy\\
Utrecht, The Netherlands}
\email{johanj@cs.uu.nl}
}
\begin{document}

\maketitle

\begin{abstract}
  Mathematical learning environments give domain-specific and immediate feedback
  to students solving a mathematical exercise. Based on a language for 
  specifying strategies, we have developed a feedback framework that 
  automatically calculates semantically rich feedback. We offer this feedback 
  functionality to mathematical learning environments via a set of web services.
  Feedback is only effective when it is precise and to the point. The tests we
  have performed give some confidence about the correctness of our feedback
  services. To increase confidence in our services, we explicitly specify the
  properties our feedback services should satisfy, and, if possible, prove them
  correct. For this, we give a formal description of the concepts used in our
  feedback framework services. The formalisation allows us to reason about these
  concepts, and to state a number of desired properties of the concepts. Our 
  feedback services use exercise descriptions for their instances on domains such 
  as logic, algebra, and linear algebra. We formulate requirements these domain
  descriptions should satisfy for the feedback services to react as expected. 
\end{abstract}

\section{Introduction}
\label{sec:intro}
We use strategies to calculate \emph{semantically rich feedback} for students
solving a mathematical exercise~\cite{heeren-10}. For example, we can calculate
a hint, or show a complete derivation for a number of mathematical domains, such
as propositional logic, linear algebra, and arithmetic. A strategy captures
expert knowledge about how to solve a particular problem. It describes which
steps a student can take to solve an exercise, and in what order. When a student
solves an exercise stepwise, we can check whether or not a step follows the
strategy.

We have developed an embedded domain-specific strategy language in which we can
specify strategies, and designed a feedback framework on top of it. This
framework is used to give detailed feedback to interactive mathematical
environments such as ActiveMath~\cite{activemath}, the Freudenthal Digital
Mathematics Environment~\cite{dwo}, and our own tool for rewriting logic
expressions~\cite{lodder}. A set of feedback services~\cite{gerdes-08} gives
mathematical environments access to our feedback functionality.

Feedback tops the list of factors leading to good learning~\cite{biggstang}, but
it is only effective when it is precise and to the point. Feedback messages
should not mislead a student practicing an exercise.  Therefore, we want to
ensure that the given feedback is to the point and relevant. The software we
have developed is augmented with numerous (unit) tests to test correct
behaviour. However, a formal definition of the concepts used, such as the
strategy language, exercises, and the services, contributes to the goal of
delivering proper feedback. In addition, we give a formulation, and if possible
a proof, of the properties they satisfy. The formalisation makes the concepts
precise, which enables us to reason about them. The feedback services use
exercise descriptions for their instances on domains such as logic, algebra, and
linear algebra. The exercise descriptions contain the strategy for solving a
particular exercise, a predicate that determines whether or not an exercise is
solved, etc. The desired properties of our feedback services lead to
requirements for the exercise descriptions. We will formulate these
requirements, and show how our formalisation helps in verifying that the
requirements are satisfied for a particular exercise.

This paper makes the following contributions:
\begin{itemize}
\item We give a formal definition of the concepts we use in our feedback 
framework.
\item We formulate the properties we want our feedback services to satisfy.
\item We formulate the requirements that an exercise description should fulfil,
  and we show these requirements are satisfied by an example exercise
  description.
\end{itemize}

This paper is organised as follows. Section~\ref{sec:framework} introduces the
fundamental concepts of our framework, such as rewrite rules and strategies.
Section~\ref{sec:services} lists the services we offer, and formulates the
properties that we want our services to satisfy. Section~\ref{sec:properties}
states a number of exercise-specific properties that an exercise description
should have. Section~\ref{sec:related} gives related work, and
Section~\ref{sec:conclusions} concludes.

\section{Strategy Language} 
\label{sec:framework} 
This section gives a formal definition of our rewrite-strategy language. We
start the introduction of our rewrite-strategy language for specifying exercises
with an example. Consider the problem of rewriting the expression \ensuremath{{({\Varid{a}^{\mathrm{3}}}\cdot{\Varid{a}^{\mathrm{4}}})^{\mathrm{2}}}} as a power of \ensuremath{\Varid{a}}, using the following three
rewrite rules:

\noindent\begin{hscode}\SaveRestoreHook
\column{B}{@{}>{\hspre}l<{\hspost}@{}}%
\column{3}{@{}>{\hspre}l<{\hspost}@{}}%
\column{30}{@{}>{\hspre}c<{\hspost}@{}}%
\column{30E}{@{}l@{}}%
\column{33}{@{}>{\hspre}l<{\hspost}@{}}%
\column{60}{@{}>{\hspre}l<{\hspost}@{}}%
\column{E}{@{}>{\hspre}l<{\hspost}@{}}%
\>[3]{}{\Varid{a}^{\Varid{x}}}\cdot{\Varid{a}^{\Varid{y}}}{}\<[30]%
\>[30]{}\mathrel{=}{}\<[30E]%
\>[33]{}{\Varid{a}^{\Varid{x}\mathbin{+}\Varid{y}}}\;{}\<[60]%
\>[60]{}\hspace{3cm}\;[\mskip1.5mu \textsc{AddExp}\mskip1.5mu]{}\<[E]%
\\
\>[3]{}{({\Varid{a}^{\Varid{x}}})^{\Varid{y}}}{}\<[30]%
\>[30]{}\mathrel{=}{}\<[30E]%
\>[33]{}{\Varid{a}^{\Varid{x}\cdot\Varid{y}}}\;{}\<[60]%
\>[60]{}\hspace{3cm}\;[\mskip1.5mu \textsc{MulExp}\mskip1.5mu]{}\<[E]%
\\
\>[3]{}{(\Varid{a}\cdot\Varid{b})^{\Varid{x}}}{}\<[30]%
\>[30]{}\mathrel{=}{}\<[30E]%
\>[33]{}{\Varid{a}^{\Varid{x}}}\cdot{\Varid{b}^{\Varid{x}}}\;{}\<[60]%
\>[60]{}\hspace{3cm}\;[\mskip1.5mu \textsc{DistExp}\mskip1.5mu]{}\<[E]%
\ColumnHook
\end{hscode}\resethooks
A possible strategy to solve the given power expression is to apply these rules
bottom-up, until none of the rules can be applied anymore. Applying this
strategy to the given expression results in the following derivation: \ensuremath{{({\Varid{a}^{\mathrm{3}}}\cdot{\Varid{a}^{\mathrm{4}}})^{\mathrm{2}}}\;{\; \stackrel{\textsc{AddExp}}{\Longrightarrow} \;}\;{({\Varid{a}^{\mathrm{7}}})^{\mathrm{2}}}\;{\; \stackrel{\textsc{MulExp}}{\Longrightarrow} \;}\;{\Varid{a}^{\mathrm{14}}}}. A strategy captures expert knowledge, in this case how to
solve an exercise involving powers. In addition to a strategy, an exercise
description consists amongst others of the rules that can be applied to a term,
the form of the term that is shown to the user (the exercise from the viewpoint
of the user), a predicate for determining whether or not an exercise is solved.

In the remainder of this section we give a formal definition of our strategy
language and related concepts. 

\subsection{Strategy Language Definition}

\begin{definition}
\label{def:strategy}
Let \ensuremath{\mu} be the fixed-point combinator \ensuremath{\mu\;\Varid{f}\mathrel{=}\Varid{f}\;(\mu\;\Varid{f})}. A strategy is
an element of the language of the following grammar:
\begin{hscode}\SaveRestoreHook
\column{B}{@{}>{\hspre}l<{\hspost}@{}}%
\column{3}{@{}>{\hspre}l<{\hspost}@{}}%
\column{6}{@{}>{\hspre}c<{\hspost}@{}}%
\column{6E}{@{}l@{}}%
\column{11}{@{}>{\hspre}l<{\hspost}@{}}%
\column{14}{@{}>{\hspre}l<{\hspost}@{}}%
\column{18}{@{}>{\hspre}c<{\hspost}@{}}%
\column{18E}{@{}l@{}}%
\column{26}{@{}>{\hspre}l<{\hspost}@{}}%
\column{30}{@{}>{\hspre}l<{\hspost}@{}}%
\column{39}{@{}>{\hspre}l<{\hspost}@{}}%
\column{43}{@{}>{\hspre}l<{\hspost}@{}}%
\column{52}{@{}>{\hspre}l<{\hspost}@{}}%
\column{56}{@{}>{\hspre}l<{\hspost}@{}}%
\column{62}{@{}>{\hspre}l<{\hspost}@{}}%
\column{66}{@{}>{\hspre}l<{\hspost}@{}}%
\column{75}{@{}>{\hspre}l<{\hspost}@{}}%
\column{79}{@{}>{\hspre}l<{\hspost}@{}}%
\column{E}{@{}>{\hspre}l<{\hspost}@{}}%
\>[3]{}\sigma{}\<[6]%
\>[6]{}\mathrel{\;\;::=\;\;}{}\<[6E]%
\>[11]{}\Varid{a}\;{}\<[14]%
\>[14]{}\mathrel{|}\;\sigma\mathrel{{<}\hspace{-0.4em}\star\hspace{-0.4em}{>}}\sigma\;{}\<[26]%
\>[26]{}\mathrel{|}\;{}\<[30]%
\>[30]{}\sigma\mathrel{{<}\hspace{-0.4em}\mid\hspace{-0.4em}{>}}\sigma\;{}\<[39]%
\>[39]{}\mathrel{|}\;{}\<[43]%
\>[43]{}\gamma\;{}\<[52]%
\>[52]{}\mathrel{|}\;{}\<[56]%
\>[56]{}\delta\;{}\<[62]%
\>[62]{}\mathrel{|}\;{}\<[66]%
\>[66]{}\ell\;\sigma\;{}\<[75]%
\>[75]{}\mathrel{|}\;{}\<[79]%
\>[79]{}\mu\;\Varid{f}{}\<[E]%
\\[\blanklineskip]%
\>[3]{}\Varid{a}{}\<[6]%
\>[6]{}\mathrel{\;\;::=\;\;}{}\<[6E]%
\>[11]{}\Varid{r}\;{}\<[14]%
\>[14]{}\mathrel{|}{}\<[18]%
\>[18]{}\mathord{\sim}\sigma{}\<[18E]%
\ColumnHook
\end{hscode}\resethooks
The components of the grammar for \ensuremath{\sigma} are called \emph{strategy
  combinators}~\cite{recognizingstrategies}. Two (sub)strategies can be combined
into a strategy using the sequence (\ensuremath{\mathrel{{<}\hspace{-0.4em}\star\hspace{-0.4em}{>}}}) or choice (\ensuremath{\mathrel{{<}\hspace{-0.4em}\mid\hspace{-0.4em}{>}}}) combinator, with
\ensuremath{\gamma} (always succeeds) and \ensuremath{\delta} (always fails) as unit elements,
respectively. A strategy can be tagged with a label (\ensuremath{\ell}). The \ensuremath{\mu}
combinator returns the fixed-point of a function that takes a strategy as
argument and returns a strategy. The non-terminal symbol \ensuremath{\Varid{a}} is either a rewrite
rule \ensuremath{\Varid{r}} or an applicability check \ensuremath{\mathord{\sim}\sigma} parametrised over a strategy.

\end{definition}

This definition corresponds to the definition of a context-free grammar (CFG),
extended with fixed-points, with an alphabet consisting of rewrite rules and
applicability checks. We distinguish two kinds of rules: \emph{minor} rules,
also called administrative rules, and \emph{major} (normal) rewrite rules. Minor
rules are used to perform administrative tasks, such as moving down into a term,
updating an environment, or automatically simplifying a term, such as replacing
\ensuremath{\Varid{x}\mathbin{+}\mathrm{0}} by \ensuremath{\Varid{x}}. A major rule may be turned into a minor rule to decrease the
granularity of intermediate steps (e.g., rewriting \ensuremath{\mathrm{3}\mathbin{+}\mathrm{4}} to \ensuremath{\mathrm{7}} in our running
example), or increase the difficulty of an exercise.  The function \ensuremath{\Varid{isMinor}} is
used to determine whether or not a rewrite rule is minor. When tracking a
student working on an exercise we maintain an environment, for example for
storing extra information. Major rules typically are the rules a user applies,
such as the three rules for manipulating powers given above. The example
derivation given above only shows the major rules. The minor rules which move
the focus into a term, for example for moving from \ensuremath{{({\Varid{a}^{\mathrm{3}}}\cdot{\Varid{a}^{\mathrm{4}}})^{\mathrm{2}}}} to \ensuremath{{\Varid{a}^{\mathrm{3}}}\cdot{\Varid{a}^{\mathrm{4}}}} to apply rule \ensuremath{\textsc{AddExp}}, are not shown in
the derivation. The author determines whether or not a rule is major or
minor. It is advisable to make only rules which the user can apply major, since
major rules will be shown to the user in derivations, and be given as hints. For
example, if the focus in the editor cannot be set by the user, it is unwise to
make a rule that changes the focus in a term a major rule.

The main purpose of our feedback framework is to track student behaviour, and to
automatically calculate feedback based on the strategy and the current term. For
this purpose we have to find out \emph{where} in a strategy an error is made,
for example. Error messages or hints depend on the position in the strategy at
which a student has arrived. To connect an error message to a particular position
in a strategy we \emph{label} a position in the strategy. A labeled strategy can
be transformed to a non-labeled strategy with additional minor rules, namely
\ensuremath{\textsc{Enter}} and \ensuremath{\textsc{Leave}}. The \ensuremath{\textsc{Enter}} and \ensuremath{\textsc{Leave}} minor rules update the environment
in order to keep track where we are in a strategy.

The \emph{language} of a strategy is the set of sequences of rewrite rules and 
applicability checks returned by function \ensuremath{\Conid{L}}:
\begin{hscode}\SaveRestoreHook
\column{B}{@{}>{\hspre}l<{\hspost}@{}}%
\column{3}{@{}>{\hspre}l<{\hspost}@{}}%
\column{18}{@{}>{\hspre}c<{\hspost}@{}}%
\column{18E}{@{}l@{}}%
\column{21}{@{}>{\hspre}l<{\hspost}@{}}%
\column{E}{@{}>{\hspre}l<{\hspost}@{}}%
\>[3]{}\Conid{L}\;(\Varid{a}){}\<[18]%
\>[18]{}\mathrel{=}{}\<[18E]%
\>[21]{}\{\mskip1.5mu \Varid{a}\mskip1.5mu\}{}\<[E]%
\\
\>[3]{}\Conid{L}\;(\sigma_{1}\mathrel{{<}\hspace{-0.4em}\star\hspace{-0.4em}{>}}\sigma_{2}){}\<[18]%
\>[18]{}\mathrel{=}{}\<[18E]%
\>[21]{}\{\mskip1.5mu \Varid{xy}\mid \Varid{x}\;\mathrel{\in}\;\Conid{L}\;(\sigma_{1}),\Varid{y}\;\mathrel{\in}\;\Conid{L}\;(\sigma_{2})\mskip1.5mu\}{}\<[E]%
\\
\>[3]{}\Conid{L}\;(\sigma_{1}\mathrel{{<}\hspace{-0.4em}\mid\hspace{-0.4em}{>}}\sigma_{2}){}\<[18]%
\>[18]{}\mathrel{=}{}\<[18E]%
\>[21]{}\Conid{L}\;(\sigma_{1})\cup\Conid{L}\;(\sigma_{2}){}\<[E]%
\\
\>[3]{}\Conid{L}\;(\gamma){}\<[18]%
\>[18]{}\mathrel{=}{}\<[18E]%
\>[21]{}\{\mskip1.5mu \epsilon \mskip1.5mu\}{}\<[E]%
\\
\>[3]{}\Conid{L}\;(\delta){}\<[18]%
\>[18]{}\mathrel{=}{}\<[18E]%
\>[21]{}\emptyset{}\<[E]%
\\
\>[3]{}\Conid{L}\;(\ell\;\sigma){}\<[18]%
\>[18]{}\mathrel{=}{}\<[18E]%
\>[21]{}\Conid{L}\;(\sigma){}\<[E]%
\\
\>[3]{}\Conid{L}\;(\mu\;\Varid{f}){}\<[18]%
\>[18]{}\mathrel{=}{}\<[18E]%
\>[21]{}\Conid{L}\;(\Varid{f}\;(\mu\;\Varid{f})){}\<[E]%
\ColumnHook
\end{hscode}\resethooks
For example, using Haskell notation for representing lists and omitting minor
rules, the list \ensuremath{[\mskip1.5mu \textsc{AddExp},\textsc{MulExp}\mskip1.5mu]} is an element of the language of the strategy
for solving power exercises. We will give a more detailed example in 
Section~\ref{ssec:navi}.

\subsection{Derived Combinators} \label{ssec:combinators} 
On top of the primitive strategy combinators given in the previous subsection,
we define a set of derived strategy combinators. These derived strategy
combinators are very useful for formulating rewrite strategies for exercises.
They are be built on top of the basic concepts.

An important combinator is the left-biased choice (\ensuremath{\triangleright}), which can be
defined as follows:
\begin{hscode}\SaveRestoreHook
\column{B}{@{}>{\hspre}l<{\hspost}@{}}%
\column{3}{@{}>{\hspre}l<{\hspost}@{}}%
\column{13}{@{}>{\hspre}l<{\hspost}@{}}%
\column{E}{@{}>{\hspre}l<{\hspost}@{}}%
\>[3]{}\sigma_{1}\triangleright\sigma_{2}{}\<[13]%
\>[13]{}\mathrel{=}\sigma_{1}\mathrel{{<}\hspace{-0.4em}\mid\hspace{-0.4em}{>}}(\mathord{\sim}\sigma_{1}\mathrel{{<}\hspace{-0.4em}\star\hspace{-0.4em}{>}}\sigma_{2}){}\<[E]%
\ColumnHook
\end{hscode}\resethooks
The strategy \ensuremath{\sigma_{2}} is only considered when \ensuremath{\sigma_{1}} is not applicable. Other
combinators, such as \ensuremath{\Varid{try}}, \ensuremath{\Varid{repeat}}, and \ensuremath{\Varid{option}}, are similar to the
well-known EBNF (extended Backus Naur form) constructs.
\begin{hscode}\SaveRestoreHook
\column{B}{@{}>{\hspre}l<{\hspost}@{}}%
\column{3}{@{}>{\hspre}l<{\hspost}@{}}%
\column{13}{@{}>{\hspre}l<{\hspost}@{}}%
\column{E}{@{}>{\hspre}l<{\hspost}@{}}%
\>[3]{}\Varid{option}\;\sigma{}\<[13]%
\>[13]{}\mathrel{=}\sigma\mathrel{{<}\hspace{-0.4em}\mid\hspace{-0.4em}{>}}\gamma{}\<[E]%
\\[\blanklineskip]%
\>[3]{}\Varid{try}\;\sigma{}\<[13]%
\>[13]{}\mathrel{=}\sigma\triangleright\gamma{}\<[E]%
\\[\blanklineskip]%
\>[3]{}\Varid{repeat}\;\sigma{}\<[13]%
\>[13]{}\mathrel{=}\mu\;(\lambda \Varid{x}\;.\;\Varid{try}\;(\sigma\mathrel{{<}\hspace{-0.4em}\star\hspace{-0.4em}{>}}\Varid{x})){}\<[E]%
\ColumnHook
\end{hscode}\resethooks
The \ensuremath{\Varid{option}} strategy combinator applies a strategy optionally. The \ensuremath{\Varid{try}} 
combinator applies a strategy when it is applicable. The \ensuremath{\Varid{repeat}} combinator
tries to apply a strategy as many times as possible.

\subsection{Navigation}
\label{ssec:navi}
Besides the derived combinators from the previous subsection, we add a set of
traversal combinators to our strategy language. Traversal combinators traverse a
term, and for example perform rewrite rules or strategies \ensuremath{\Varid{somewhere}} or
\ensuremath{\Varid{bottomUp}}. We use a number of administrative rules for navigating through the
abstract syntax tree (AST) of an expression: \ensuremath{\textsc{Up}}, \ensuremath{\textsc{Down}}, \ensuremath{\textsc{Left}}, and \ensuremath{\textsc{Right}}.
The minor rule \ensuremath{\textsc{Down}} takes a function as argument, which decides which child to
select based on the environment. Using \ensuremath{\textsc{Down}} we construct a minor rule that
selects all children using the choice combinator: \ensuremath{\textsc{Downs}}. Navigation is
implemented by means of a zipper~\cite{huet_zipper}, which is an efficient data
structure to define and move a focus in an expression. The zipper can be seen as
a combination of an expression and its context. An alternative way to navigate
is to use position information of (sub)expressions. An implementation using this
approach uses a list of integers denoting a path from the top of the expression
to the subexpression in focus. This approach is not as efficient and type-safe
as a zipper, since the AST needs to be traversed to retrieve the subexpression
in focus, and since it is possible to specify paths that do not correspond to a
position in the tree.

Many traversal combinators use the \ensuremath{\Varid{once}} combinator:
\begin{hscode}\SaveRestoreHook
\column{B}{@{}>{\hspre}l<{\hspost}@{}}%
\column{4}{@{}>{\hspre}l<{\hspost}@{}}%
\column{14}{@{}>{\hspre}l<{\hspost}@{}}%
\column{E}{@{}>{\hspre}l<{\hspost}@{}}%
\>[4]{}\Varid{once}\;\sigma\mathrel{=}{}\<[14]%
\>[14]{}\textsc{Downs}\mathrel{{<}\hspace{-0.4em}\star\hspace{-0.4em}{>}}\sigma\mathrel{{<}\hspace{-0.4em}\star\hspace{-0.4em}{>}}\textsc{Up}{}\<[E]%
\ColumnHook
\end{hscode}\resethooks
The \ensuremath{\Varid{once}} combinator takes a strategy as argument, and applies it to a direct
child of the expression currently in focus. After applying \ensuremath{\Varid{once}} the focus is
again at the top-level expression. The \ensuremath{\Varid{once}} combinator returns all possible
ways, by means of the choice combinator introduced by \ensuremath{\textsc{Downs}}, in which a
strategy can be applied once to a direct child of an expression. So an
application of a strategy constructed with the \ensuremath{\Varid{once}} combinator may have more
than one result, depending on whether or not strategy \ensuremath{\sigma} is applicable to the
children.

The traversal combinator \ensuremath{\Varid{somewhere}} applies a strategy to a single
subexpression (including the expression itself).
\begin{hscode}\SaveRestoreHook
\column{B}{@{}>{\hspre}l<{\hspost}@{}}%
\column{4}{@{}>{\hspre}l<{\hspost}@{}}%
\column{18}{@{}>{\hspre}c<{\hspost}@{}}%
\column{18E}{@{}l@{}}%
\column{21}{@{}>{\hspre}l<{\hspost}@{}}%
\column{E}{@{}>{\hspre}l<{\hspost}@{}}%
\>[4]{}\Varid{somewhere}\;\sigma{}\<[18]%
\>[18]{}\mathrel{=}{}\<[18E]%
\>[21]{}\mu\;(\lambda \Varid{x}\;.\;\sigma\mathrel{{<}\hspace{-0.4em}\mid\hspace{-0.4em}{>}}\Varid{once}\;\Varid{x}){}\<[E]%
\ColumnHook
\end{hscode}\resethooks
If we want to be more specific about where to apply a strategy, we can instead
use \ensuremath{\Varid{bottomUp}} or \ensuremath{\Varid{topDown}}:
\begin{hscode}\SaveRestoreHook
\column{B}{@{}>{\hspre}l<{\hspost}@{}}%
\column{3}{@{}>{\hspre}l<{\hspost}@{}}%
\column{12}{@{}>{\hspre}l<{\hspost}@{}}%
\column{15}{@{}>{\hspre}c<{\hspost}@{}}%
\column{15E}{@{}l@{}}%
\column{18}{@{}>{\hspre}l<{\hspost}@{}}%
\column{E}{@{}>{\hspre}l<{\hspost}@{}}%
\>[3]{}\Varid{bottomUp}\;\sigma{}\<[15]%
\>[15]{}\mathrel{=}{}\<[15E]%
\>[18]{}\mu\;(\lambda \Varid{x}\;.\;\Varid{once}\;\Varid{x}\triangleright\sigma){}\<[E]%
\\[\blanklineskip]%
\>[3]{}\Varid{topDown}\;{}\<[12]%
\>[12]{}\sigma{}\<[15]%
\>[15]{}\mathrel{=}{}\<[15E]%
\>[18]{}\mu\;(\lambda \Varid{x}\;.\;\sigma\triangleright\Varid{once}\;\Varid{x}){}\<[E]%
\ColumnHook
\end{hscode}\resethooks
These combinators search for a suitable location to apply an argument strategy
in a bottom-up or top-down fashion, without imposing an order in which the
children are visited. These combinators do not apply their argument strategy 
exhaustively, instead, the argument strategy is applied only once.

Navigation operators navigate through the abstract syntax of the domain on which
the rewrite rules are specified. The example rules we presented at the beginning 
of this section work on algebraic expressions containing powers. We give a
formal definition of the power domain, and the zipper we use for navigation on 
this domain.
\begin{hscode}\SaveRestoreHook
\column{B}{@{}>{\hspre}l<{\hspost}@{}}%
\column{3}{@{}>{\hspre}l<{\hspost}@{}}%
\column{8}{@{}>{\hspre}c<{\hspost}@{}}%
\column{8E}{@{}l@{}}%
\column{13}{@{}>{\hspre}l<{\hspost}@{}}%
\column{E}{@{}>{\hspre}l<{\hspost}@{}}%
\>[3]{}\Varid{e}{}\<[8]%
\>[8]{}\mathrel{\;\;::=\;\;}{}\<[8E]%
\>[13]{}\Varid{v}\;\mathrel{|}\;{\Varid{e}^{\Varid{n}}}\;\mathrel{|}\;\Varid{e}\cdot\Varid{e}{}\<[E]%
\\
\>[3]{}\varphi {}\<[8]%
\>[8]{}\mathrel{\;\;::=\;\;}{}\<[8E]%
\>[13]{}{\llbracket\Varid{e}\rrbracket}\;\mathrel{|}\;{\varphi ^{\Varid{n}}}\;\mathrel{|}\;\varphi \cdot\Varid{e}\;\mathrel{|}\;\Varid{e}\cdot\varphi {}\<[E]%
\ColumnHook
\end{hscode}\resethooks
Here \ensuremath{\Varid{v}} stands for a variable and \ensuremath{\Varid{n}} for an integer number. In the grammar for
the zipper, the expression between double square brackets is the expression in
focus. The focus can appear inside a power \ensuremath{{\varphi ^{\Varid{n}}}}, or move left or right
into a product (\ensuremath{\varphi \cdot\Varid{e}}, or \ensuremath{\Varid{e}\cdot\varphi }). The minor rules for navigation are
defined by analysing the various forms of the zipper. For example, the \ensuremath{\textsc{Up}} rule
is defined as follows:
\begin{hscode}\SaveRestoreHook
\column{B}{@{}>{\hspre}l<{\hspost}@{}}%
\column{3}{@{}>{\hspre}l<{\hspost}@{}}%
\column{18}{@{}>{\hspre}l<{\hspost}@{}}%
\column{26}{@{}>{\hspre}l<{\hspost}@{}}%
\column{47}{@{}>{\hspre}l<{\hspost}@{}}%
\column{E}{@{}>{\hspre}l<{\hspost}@{}}%
\>[3]{}\textsc{Up}\;\varphi \mathrel{=}\mathbf{case}\;{}\<[18]%
\>[18]{}\varphi \;\mathbf{of}\;{}\<[26]%
\>[26]{}{{\llbracket\Varid{e}\rrbracket}^{\Varid{i}}}{}\<[47]%
\>[47]{}\rightarrow{\llbracket{\Varid{e}^{\Varid{i}}}\rrbracket}{}\<[E]%
\\
\>[26]{}{\llbracket\Varid{e}_{1}\rrbracket}\cdot\Varid{e}_{2}{}\<[47]%
\>[47]{}\rightarrow{\llbracket\Varid{e}_{1}\cdot\Varid{e}_{2}\rrbracket}{}\<[E]%
\\
\>[26]{}\Varid{e}_{1}\cdot{\llbracket\Varid{e}_{2}\rrbracket}{}\<[47]%
\>[47]{}\rightarrow{\llbracket\Varid{e}_{1}\cdot\Varid{e}_{2}\rrbracket}{}\<[E]%
\\
\>[26]{}{\varphi ^{\Varid{n}}}{}\<[47]%
\>[47]{}\rightarrow{(\textsc{Up}\;\varphi )^{\Varid{n}}}{}\<[E]%
\\
\>[26]{}\varphi \cdot\Varid{e}{}\<[47]%
\>[47]{}\rightarrow\textsc{Up}\;\varphi \cdot\Varid{e}{}\<[E]%
\\
\>[26]{}\Varid{e}\cdot\varphi {}\<[47]%
\>[47]{}\rightarrow\Varid{e}\cdot\textsc{Up}\;\varphi {}\<[E]%
\ColumnHook
\end{hscode}\resethooks

We return to the example we gave at the beginning of this section. Now that we
have a formal definition of strategies, we can express the informal strategy in
terms of the derived strategy combinators. 
\begin{hscode}\SaveRestoreHook
\column{B}{@{}>{\hspre}l<{\hspost}@{}}%
\column{3}{@{}>{\hspre}l<{\hspost}@{}}%
\column{E}{@{}>{\hspre}l<{\hspost}@{}}%
\>[3]{}\Varid{writeAsPowerOf}\mathrel{=}\ell\;(\Varid{repeat}\;(\Varid{bottomUp}\;(\textsc{AddExp}\mathrel{{<}\hspace{-0.4em}\mid\hspace{-0.4em}{>}}\textsc{MulExp}\mathrel{{<}\hspace{-0.4em}\mid\hspace{-0.4em}{>}}\textsc{DistExp}))){}\<[E]%
\ColumnHook
\end{hscode}\resethooks
Applying the \ensuremath{\Varid{writeAsPowerOf}} strategy to the expression \ensuremath{{({\Varid{a}^{\mathrm{3}}}\cdot{\Varid{a}^{\mathrm{4}}})^{\mathrm{2}}}} gives the following derivation:
\begin{hscode}\SaveRestoreHook
\column{B}{@{}>{\hspre}l<{\hspost}@{}}%
\column{3}{@{}>{\hspre}l<{\hspost}@{}}%
\column{E}{@{}>{\hspre}l<{\hspost}@{}}%
\>[3]{}{\llbracket{({\Varid{a}^{\mathrm{3}}}\cdot{\Varid{a}^{\mathrm{4}}})^{\mathrm{2}}}\rrbracket}\;{\; \stackrel{(\textsc{Enter}\;\ell)}{\Longrightarrow} \;}\;{\llbracket{({\Varid{a}^{\mathrm{3}}}\cdot{\Varid{a}^{\mathrm{4}}})^{\mathrm{2}}}\rrbracket}\;{\; \stackrel{\textsc{Down}}{\Longrightarrow} \;}\;{{\llbracket{\Varid{a}^{\mathrm{3}}}\cdot{\Varid{a}^{\mathrm{4}}}\rrbracket}^{\mathrm{2}}}\;{\; \stackrel{\textsc{AppCheck}}{\Longrightarrow} \;}\;{{\llbracket{\Varid{a}^{\mathrm{3}}}\cdot{\Varid{a}^{\mathrm{4}}}\rrbracket}^{\mathrm{2}}}\;{\; \stackrel{\textsc{AddExp}}{\Longrightarrow} \;}{}\<[E]%
\\[\blanklineskip]%
\>[3]{}{{\llbracket{\Varid{a}^{\mathrm{7}}}\rrbracket}^{\mathrm{2}}}\;{\; \stackrel{\textsc{Up}}{\Longrightarrow} \;}\;{\llbracket{({\Varid{a}^{\mathrm{7}}})^{\mathrm{2}}}\rrbracket}\;{\; \stackrel{\textsc{AppCheck}}{\Longrightarrow} \;}\;{\llbracket{({\Varid{a}^{\mathrm{7}}})^{\mathrm{2}}}\rrbracket}\;{\; \stackrel{\textsc{MulExp}}{\Longrightarrow} \;}\;{\llbracket{\Varid{a}^{\mathrm{14}}}\rrbracket}\;{\; \stackrel{\textsc{AppCheck}}{\Longrightarrow} \;}\;{\llbracket{\Varid{a}^{\mathrm{14}}}\rrbracket}\;{\; \stackrel{(\textsc{Leave}\;\ell)}{\Longrightarrow} \;}\;{\llbracket{\Varid{a}^{\mathrm{14}}}\rrbracket}{}\<[E]%
\ColumnHook
\end{hscode}\resethooks
The \ensuremath{\textsc{AppCheck}} (applicability check) is introduced by the \ensuremath{\Varid{repeat}} and
\ensuremath{\Varid{bottomUp}} combinator. The navigation rules do not work directly on an
expression, but on the zipper containing the expression (\ensuremath{\varphi }). As a
consequence, if a strategy uses a traversal combinator it is only applicable to
an expression in a context. (In the example derivation above we omit the context
from the rewrite steps, and only show the expression and the focus.) To keep the
definition of rewrite rules simple and clean, we have a function that lifts a
rewrite rule to operate on an expression in a context. The definition of this
function is omitted.

\subsection{Semantics}
\label{ssec:semantics}
At the start of this section we defined the language of a strategy. The language
contains lists of rewrite rules and applicability conditions. In this subsection
we define how a strategy transforms an expression. We use the following
terminology in the definition of the semantics:
\begin{itize}
\item[{\it Expression.}] We only consider expressions generated by a grammar.
\item[{\it Environment.}] An environment stores additional information at
  intermediate steps, such as auxiliary results. An environment is a set of
  key/value pairs, which can be added, removed, consulted, and updated.
\item[{\it Rewrite rule.}] A rewrite rule \ensuremath{\Varid{r}} is a binary relation on the
  product of an environment and an expression: \ensuremath{(\Gamma_{1}\mathrel{\times}\Varid{e}_{1})\;{\buildrel \Varid{r} \over \leadsto}\;(\Gamma_{2}\mathrel{\times}\Varid{e}_{2})}. A
  rewrite rule is tagged with a boolean indicating whether or not it is a minor
  rule. If the rewrite rule does not change the environment we use the notation
  \ensuremath{\Varid{e}_{1}\;{\buildrel \Varid{r} \over \leadsto}\;\Varid{e}_{2}}.

  A lifted rewrite rule is a binary relation on the product of an environment
  and a zipper (an expression in focus together with its context): \ensuremath{(\Gamma_{1}\mathrel{\times}\varphi _{1})\;{\buildrel \Varid{r} \over \leadsto}\;(\Gamma_{2}\mathrel{\times}\varphi _{2})}.
\item[{\it State.}] The state of an exercise is modelled as the product of an
  environment, a zipper, and a strategy.
\end{itize}

To check if we have reached an end state we want to determine whether or not the
empty sentence is a member of the language of a strategy. For this purpose we 
define a function \ensuremath{\Varid{empty}}, similar to the function \ensuremath{\Varid{empty}} defined for CFGs. We 
first define an auxiliary function that returns all sentences that consist of 
minor rules only:
\begin{definition}
  Function \ensuremath{\Varid{minorSentences}} returns all sentences in the language of a
  strategy that consist of minor rules only, including the empty sentence
  \ensuremath{\epsilon }:
  \begin{hscode}\SaveRestoreHook
\column{B}{@{}>{\hspre}l<{\hspost}@{}}%
\column{3}{@{}>{\hspre}l<{\hspost}@{}}%
\column{40}{@{}>{\hspre}c<{\hspost}@{}}%
\column{40E}{@{}l@{}}%
\column{43}{@{}>{\hspre}l<{\hspost}@{}}%
\column{67}{@{}>{\hspre}l<{\hspost}@{}}%
\column{E}{@{}>{\hspre}l<{\hspost}@{}}%
\>[3]{}\Varid{minorSentences}\;(\sigma)\mathrel{=}\{\mskip1.5mu \Varid{r}_{1}\;\ldots\;\mathit{r_n}{}\<[40]%
\>[40]{}\mid {}\<[40E]%
\>[43]{}\Varid{r}_{1}\;\ldots\;\mathit{r_n}\;\mathrel{\in}\;\Conid{L}\;(\sigma),{}\<[67]%
\>[67]{}\forall\;\Varid{i}\;\mathrel{\in}\;\mathrm{1}\;\ldots\;\Varid{n}\;.\;\Varid{isMinor}\;(\mathit{r_i})\mskip1.5mu\}{}\<[E]%
\ColumnHook
\end{hscode}\resethooks
\end{definition}
\noindent
We need this function to determine if there are any trailing minor rules after
the last major rewrite rule. We define the \ensuremath{\Varid{empty}} function in terms of the
\ensuremath{\Varid{minorSentences}} function:
\begin{definition}
  Function \ensuremath{\Varid{empty}} checks whether or not the language of a strategy contains 
  the empty sentence \ensuremath{\epsilon }, or a sentence consisting of minor rules only:
  \begin{hscode}\SaveRestoreHook
\column{B}{@{}>{\hspre}l<{\hspost}@{}}%
\column{3}{@{}>{\hspre}l<{\hspost}@{}}%
\column{E}{@{}>{\hspre}l<{\hspost}@{}}%
\>[3]{}\Varid{empty}\;(\sigma)\mathrel{=}\Varid{minorSentences}\;(\sigma)\;\neq\;\emptyset{}\<[E]%
\ColumnHook
\end{hscode}\resethooks
\end{definition}
\noindent
Both functions are easily lifted to take a state as argument.

The smallest action that can be performed with a strategy is a \emph{step}: the
application of a rewrite rule. Before we define a step relation that performs a
state transition, we define a relation that splits a strategy into its first
rule or applicability check and the remaining strategy.

\begin{definition}
\label{def:split}
  The relation \ensuremath{\mapsto} splits a strategy into a rule or an applicability check and 
  the remaining strategy: \ensuremath{\sigma_{1}\mapsto\Varid{a}\mathrel{{<}\hspace{-0.4em}\star\hspace{-0.4em}{>}}\sigma_{2}}.  
  \renewcommand{\arraystretch}{0.2}

  \begin{tabular}{P}
    \ensuremath{\Varid{a}\mapsto\Varid{a}\mathrel{{<}\hspace{-0.4em}\star\hspace{-0.4em}{>}}\gamma} \quad  
    \infrule{\ensuremath{\sigma_{1}\mapsto\Varid{a}\mathrel{{<}\hspace{-0.4em}\star\hspace{-0.4em}{>}}\sigma_{3}}}{\ensuremath{\sigma_{1}\mathrel{{<}\hspace{-0.4em}\star\hspace{-0.4em}{>}}\sigma_{2}\mapsto\Varid{a}\mathrel{{<}\hspace{-0.4em}\star\hspace{-0.4em}{>}}(\sigma_{3}\mathrel{{<}\hspace{-0.4em}\star\hspace{-0.4em}{>}}\sigma_{2})}} \quad
    \infrule{\ensuremath{\epsilon \;\mathrel{\in}\;\Conid{L}\;(\sigma_{1})\;\quad\;\sigma_{2}\mapsto\Varid{a}\mathrel{{<}\hspace{-0.4em}\star\hspace{-0.4em}{>}}\sigma_{3}}}{\ensuremath{\sigma_{1}\mathrel{{<}\hspace{-0.4em}\star\hspace{-0.4em}{>}}\sigma_{2}\mapsto\Varid{a}\mathrel{{<}\hspace{-0.4em}\star\hspace{-0.4em}{>}}\sigma_{3}}} \
    \\
    \infrule{\ensuremath{\sigma_{1}\mapsto\Varid{a}\mathrel{{<}\hspace{-0.4em}\star\hspace{-0.4em}{>}}\sigma_{3}}}{\ensuremath{\sigma_{1}\mathrel{{<}\hspace{-0.4em}\mid\hspace{-0.4em}{>}}\sigma_{2}\rightarrow\Varid{a}\mathrel{{<}\hspace{-0.4em}\star\hspace{-0.4em}{>}}\sigma_{3}}} \quad
    \infrule{\ensuremath{\sigma_{2}\mapsto\Varid{a}\mathrel{{<}\hspace{-0.4em}\star\hspace{-0.4em}{>}}\sigma_{3}}}{\ensuremath{\sigma_{1}\mathrel{{<}\hspace{-0.4em}\mid\hspace{-0.4em}{>}}\sigma_{2}\rightarrow\Varid{a}\mathrel{{<}\hspace{-0.4em}\star\hspace{-0.4em}{>}}\sigma_{3}}} \quad
    \infrule{\ensuremath{\Varid{f}\;(\mu\;\Varid{f})\mapsto\Varid{a}\mathrel{{<}\hspace{-0.4em}\star\hspace{-0.4em}{>}}\sigma}}{\ensuremath{\mu\;\Varid{f}\mapsto\Varid{a}\mathrel{{<}\hspace{-0.4em}\star\hspace{-0.4em}{>}}\sigma}}   
    \\
    \ensuremath{\ell\;\sigma\mapsto\textsc{Enter}\;\ell\mathrel{{<}\hspace{-0.4em}\star\hspace{-0.4em}{>}}(\sigma\mathrel{{<}\hspace{-0.4em}\star\hspace{-0.4em}{>}}\textsc{Leave}\;\ell)}
  \end{tabular}
  
\end{definition}

\begin{definition}
  The step operator \ensuremath{{\buildrel \Varid{r} \over \rightarrow}} denotes the relation between the current state \ensuremath{\Conid{S}_{1}}
  and a new state \ensuremath{\Conid{S}_{2}} (obtained by applying the rewrite rule \ensuremath{\Varid{r}}).

  \noindent
  \begin{tabular}{P}
    \infrule{\ensuremath{\sigma_{1}\mapsto\Varid{r}\mathrel{{<}\hspace{-0.4em}\star\hspace{-0.4em}{>}}\sigma_{2}\;\quad\;(\Gamma_{1}\mathrel{\times}\varphi _{1})\;{\buildrel \Varid{r} \over \leadsto}\;(\Gamma_{2}\mathrel{\times}\varphi _{2})}}
            {\ensuremath{(\Gamma_{1}\mathrel{\times}\varphi _{1}\mathrel{\times}\sigma_{1})\;{\buildrel \Varid{r} \over \rightarrow}\;(\Gamma_{2}\mathrel{\times}\varphi _{2}\mathrel{\times}\sigma_{2})}}
    \quad
    \infrule{\ensuremath{\sigma_{1}\mapsto\mathord{\sim}\sigma_{2}\mathrel{{<}\hspace{-0.4em}\star\hspace{-0.4em}{>}}\sigma_{3}\;\quad\;\Varid{run}\;(\Gamma_{1}\mathrel{\times}\varphi _{1}\mathrel{\times}\sigma_{2})\mathrel{=}\emptyset}\\[\infruledist]
             \ensuremath{(\Gamma_{1}\mathrel{\times}\varphi _{1}\mathrel{\times}\sigma_{3})\;{\buildrel \Varid{r} \over \rightarrow}\;(\Gamma_{2}\mathrel{\times}\varphi _{2}\mathrel{\times}\sigma_{4})}}
           {\ensuremath{(\Gamma_{1}\mathrel{\times}\varphi _{1}\mathrel{\times}\sigma_{1})\;{\buildrel \Varid{r} \over \rightarrow}\;(\Gamma_{2}\mathrel{\times}\varphi _{2}\mathrel{\times}\sigma_{4})}}
  \end{tabular}
  
\end{definition}
The \ensuremath{\Varid{run}} function used in the last rule applies a strategy to a term in a 
context; its definition is given below.

The step relation \ensuremath{{\buildrel \Varid{r} \over \rightarrow}} ignores whether or not a rule is minor, and it deals
with minor and major rewrite rules in the same way. Many of the feedback 
services we have defined ignore minor rewrite rules, since we do not want to 
show such administrative steps to a user. We define a relation similar to step, 
which ignores minor rewrite rules.

\begin{definition}
  The big step operator \ensuremath{{\buildrel \Varid{r} \over \twoheadrightarrow}} denotes the relation between a state \ensuremath{\Conid{S}_{1}} and
  a new state \ensuremath{\Conid{S}_{2}}. The new state is obtained by possibly applying minor rules,
  followed by the application of a single major rewrite rule \ensuremath{\Varid{r}}. If \ensuremath{\Varid{r}} is the
  last major rule then trailing minor rules, if any, are applied as well.

  \renewcommand{\arraystretch}{0.2}  

  \begin{tabular}{P}
    \infrule{\ensuremath{\Conid{S}_{1}\;\stackrel{\Varid{r}_{1}}{\rightarrow}\;\Conid{S}_{2}\;\quad\;\Varid{isMinor}\;(\Varid{r}_{1})\;\quad\;\Conid{S}_{2}\;\stackrel{\Varid{r}_{2}}{\twoheadrightarrow}\;\Conid{S}_{3}}}
            {\ensuremath{\Conid{S}_{1}\;\stackrel{\Varid{r}_{2}}{\twoheadrightarrow}\;\Conid{S}_{3}}}
    \\
    \infrule{\ensuremath{\Conid{S}_{1}\;{\buildrel \Varid{r} \over \rightarrow}\;\Conid{S}_{2}\;\quad\;\neg \;\Varid{isMinor}\;(\Varid{r})\;\quad\;\Varid{minorSentences}\;(\Conid{S}_{2})\mathrel{=}\emptyset}}
            {\ensuremath{\Conid{S}_{1}\;{\buildrel \Varid{r} \over \twoheadrightarrow}\;\Conid{S}_{2}}}
    \\
   \infrule{\ensuremath{\Conid{S}_{1}\;{\buildrel \Varid{r} \over \rightarrow}\;\Conid{S}_{2}\;\quad\;\neg \;\Varid{isMinor}\;(\Varid{r})\;\quad\;\vec{\Varid{m}}\;\mathrel{\in}\;\Varid{minorSentences}\;(\Conid{S}_{2})\;\quad\;\Conid{S}_{2}\;\stackrel{\vec{\Varid{m}}}{\rightarrow}\;\Conid{S}_{3}}}
            {\ensuremath{\Conid{S}_{1}\;{\buildrel \Varid{r} \over \twoheadrightarrow}\;\Conid{S}_{3}}}
  \end{tabular}
  where \ensuremath{\vec{\Varid{m}}} is a sequence of minor rules and \ensuremath{\stackrel{\vec{\Varid{m}}}{\rightarrow}} is the
  sequential application thereof.
 
\end{definition}

It is important, when performing a big step, that trailing minor rules are
applied. The application of trailing minor rules ensure that exhaustively
applying the step or big step operator on a term, will end up in the same end
state(s). This enables us to keep the generated feedback consistent.

\begin{definition}
  The \ensuremath{\Varid{run}} function is the closure of \ensuremath{{\buildrel \Varid{r} \over \twoheadrightarrow}}, and generates all possible end
  states from a begin state \ensuremath{\Conid{S}_{0}}. An end state contains the empty strategy:
  \ensuremath{\gamma}.
  \begin{hscode}\SaveRestoreHook
\column{B}{@{}>{\hspre}l<{\hspost}@{}}%
\column{3}{@{}>{\hspre}l<{\hspost}@{}}%
\column{E}{@{}>{\hspre}l<{\hspost}@{}}%
\>[3]{}\Varid{run}\;(\Conid{S}_{0})\mathrel{=}\{\mskip1.5mu (\Gamma\mathrel{\times}\varphi \mathrel{\times}\gamma)\mid \Conid{S}_{0}\;\stackrel{\star}{\twoheadrightarrow}\;(\Gamma\mathrel{\times}\varphi \mathrel{\times}\gamma)\mskip1.5mu\}{}\<[E]%
\ColumnHook
\end{hscode}\resethooks
  where \ensuremath{\stackrel{\star}{\twoheadrightarrow}} is the transitive closure of \ensuremath{{\twoheadrightarrow}}.
  
\end{definition}

\subsection{Non-determinism}
\label{ssec:non-determinism}
Almost all exercises can be solved in several, correct ways. For example,
consider the expression \ensuremath{{({\Varid{a}^{\mathrm{3}}}\cdot{\Varid{a}^{\mathrm{4}}})^{\mathrm{2}}}} again, and
suppose it should be solved with the strategy \ensuremath{\Varid{writeAsPowerOf'}} which is
obtained by replacing \ensuremath{\Varid{bottomUp}} by \ensuremath{\Varid{somewhere}} in the strategy
\ensuremath{\Varid{writeAsPowerOf}}.
\begin{hscode}\SaveRestoreHook
\column{B}{@{}>{\hspre}l<{\hspost}@{}}%
\column{3}{@{}>{\hspre}l<{\hspost}@{}}%
\column{E}{@{}>{\hspre}l<{\hspost}@{}}%
\>[3]{}\Varid{writeAsPowerOf'}\mathrel{=}\ell\;(\Varid{repeat}\;(\Varid{somewhere}\;(\textsc{AddExp}\mathrel{{<}\hspace{-0.4em}\mid\hspace{-0.4em}{>}}\textsc{MulExp}\mathrel{{<}\hspace{-0.4em}\mid\hspace{-0.4em}{>}}\textsc{DistExp}))){}\<[E]%
\ColumnHook
\end{hscode}\resethooks
One of the questions we want to be able to answer is: what is the next step in
order to solve the exercise? This type of feedback is given by the \ensuremath{\Varid{onefirst}}
feedback service, which we will describe in more detail in the next section. In
the above example there are two possibilities: \ensuremath{\textsc{DistExp}} can be applied to the
entire expression, and \ensuremath{\textsc{AddExp}} is applicable to the subexpression \ensuremath{{\Varid{a}^{\mathrm{3}}}\cdot{\Varid{a}^{\mathrm{4}}}}. Both steps are correct, but which do we choose? Making a random
choice would make our feedback framework non-deterministic. This problem may
show up whenever we use the choice combinator, which may also be introduced by
other combinators (such as \ensuremath{\textsc{Downs}}), to combine various solution
strategies. Applying a strategy that uses the choice combinator to an expression
may result in multiple answers.

To prevent non-deterministic behaviour we introduce a rule ordering: a total
order, denoted by \ensuremath{\mathbin{<}}, on rules. For now, the ordering is only on the rules
themselves, but one can imagine taking the environment into account in the rule
ordering. An example rule ordering for the power domain is: \ensuremath{\textsc{AddExp}\mathbin{<}\textsc{MulExp}\mathbin{<}\textsc{DistExp}}.
In this case, the first step in our example is \ensuremath{\textsc{AddExp}}.

\subsection{Restrictions}
\label{ssec:restrictions}
To use strategies to track student behaviour and give feedback, we impose some
restrictions on the form of strategies. These restrictions are similar to some
of the restrictions imposed on context-free grammars to be able to use them for
parsing.

\paragraph{Left-recursion.}
A context-free grammar is left-recursive if it contains a nonterminal that can 
be rewritten in one or more steps using the productions of the grammar to a 
sequence of symbols that starts with the same nonterminal. The same definition
applies to strategies. For example, the following strategy is left-recursive:
\begin{hscode}\SaveRestoreHook
\column{B}{@{}>{\hspre}l<{\hspost}@{}}%
\column{3}{@{}>{\hspre}l<{\hspost}@{}}%
\column{E}{@{}>{\hspre}l<{\hspost}@{}}%
\>[3]{}\Varid{leftRecur}\mathrel{=}\mu\;(\lambda \Varid{x}\;.\;\Varid{x}\mathrel{{<}\hspace{-0.4em}\star\hspace{-0.4em}{>}}\textsc{AddExp}){}\<[E]%
\ColumnHook
\end{hscode}\resethooks
The left-recursion is obvious in this strategy, since \ensuremath{\Varid{x}} is in the leftmost
position in the body of the abstraction. Left-recursion is not always this easy
to spot. Strategies with leading minor rules may or may not be
left-recursive. Strictly speaking, these strategies are not left-recursive
because the strategy grammar does not differentiate between minor and major
rules. However, our semantics make it that these strategies sometimes display
left-recursive behaviour. For example:
\begin{hscode}\SaveRestoreHook
\column{B}{@{}>{\hspre}l<{\hspost}@{}}%
\column{3}{@{}>{\hspre}l<{\hspost}@{}}%
\column{E}{@{}>{\hspre}l<{\hspost}@{}}%
\>[3]{}\Varid{leftRecur'}\mathrel{=}\mu\;(\lambda \Varid{x}\;.\;\textsc{Down}\mathrel{{<}\hspace{-0.4em}\star\hspace{-0.4em}{>}}\Varid{x}\mathrel{{<}\hspace{-0.4em}\star\hspace{-0.4em}{>}}\textsc{AddExp}){}\<[E]%
\ColumnHook
\end{hscode}\resethooks
Here, applying the minor rule \ensuremath{\textsc{Down}} repeatedly will eventually cause the 
strategy to reach the leaf of an expression tree, and stop. Hence this strategy 
is not left-recursive. However, this is a property of \ensuremath{\textsc{Down}} that is not shared
by all other minor rules. If we use a minor rule that increases a counter in the
environment, an action that will always succeed, the strategy is left-recursive.

Top-down recursive parsing using a left-recursive context-free grammar is
difficult. We have chosen for top-down recursive parsing because we need to
parse a derivation incrementally. Other parsing schemes, such as bottom-up
parsing, also have their difficulties. A grammar represented in parser
combinators~\cite{hutton92} is not allowed to be left-recursive. Similarly, for
a strategy to be used in our framework, it should not be left-recursive. Trying
to determine the next possible symbol(s) of a left-recursive strategy by means
of the split operator will get stuck in a loop.

Left-recursive strategies are not the only source of non-terminating strategy
calculations. The fact that our strategy language has a fixed-point combinator
implies that we are vulnerable to non-termination. The implementation of our
strategy language has been augmented with `time-outs' that will stop the
execution and report an error message.

\paragraph{Left-factorisation.}
Left-factoring is a grammar transformation that is useful when two productions
for the same nonterminal start with the same sequence of terminal and/or
nonterminal symbols. This transformation factors out the common part of such
productions. In a strategy, the equivalent transformation factors out common
sequences of rewrite rules from sub-strategies separated by the choice
combinator.

A strategy that contains left-factors may be problematic. Consider the
following, somewhat contrived, strategy:
\begin{hscode}\SaveRestoreHook
\column{B}{@{}>{\hspre}l<{\hspost}@{}}%
\column{3}{@{}>{\hspre}l<{\hspost}@{}}%
\column{E}{@{}>{\hspre}l<{\hspost}@{}}%
\>[3]{}\Varid{leftStrat}\mathrel{=}\ell_1\;(\textsc{AddExp}\mathrel{{<}\hspace{-0.4em}\star\hspace{-0.4em}{>}}\textsc{MulExp})\mathrel{{<}\hspace{-0.4em}\mid\hspace{-0.4em}{>}}\ell_2\;(\textsc{AddExp}\mathrel{{<}\hspace{-0.4em}\star\hspace{-0.4em}{>}}\textsc{DistExp}){}\<[E]%
\ColumnHook
\end{hscode}\resethooks
\noindent The two sub-strategies labeled with \ensuremath{\ell_1} and \ensuremath{\ell_2} have a left-factor:
the rewrite rule \ensuremath{\textsc{AddExp}}. After the application of \ensuremath{\textsc{AddExp}}, we have to decide
which sub-strategy to follow. Either we follow sub-strategy \ensuremath{\ell_1}, or we follow
sub-strategy \ensuremath{\ell_2}. Committing to a choice after seeing an application \ensuremath{\textsc{AddExp}} is
unfortunate, since it will force the student to follow the same sub-strategy. For
example, if \ensuremath{\ell_1} is chosen after an application of \ensuremath{\textsc{AddExp}} has been submitted,
and a student subsequently takes the \ensuremath{\textsc{DistExp}} step, we erroneously report that
the step does not follow the strategy. Therefore, left-factorising a strategy is
desirable, since we do not want to commit early to a particular sub-strategy. The
example strategy can be left-factored as follows:
\begin{hscode}\SaveRestoreHook
\column{B}{@{}>{\hspre}l<{\hspost}@{}}%
\column{4}{@{}>{\hspre}l<{\hspost}@{}}%
\column{E}{@{}>{\hspre}l<{\hspost}@{}}%
\>[4]{}\Varid{leftStrat'}\mathrel{=}\textsc{AddExp}\mathrel{{<}\hspace{-0.4em}\star\hspace{-0.4em}{>}}(\textsc{MulExp}\mathrel{{<}\hspace{-0.4em}\mid\hspace{-0.4em}{>}}\textsc{DistExp}){}\<[E]%
\ColumnHook
\end{hscode}\resethooks
It is clear how to left-factor (major) rewrite rules, but how should we deal
with labels, or minor rules in general? In the remainder of this section we
focus on how to deal with labels. Pushing the labels inside the choice
combinator,
\begin{hscode}\SaveRestoreHook
\column{B}{@{}>{\hspre}l<{\hspost}@{}}%
\column{4}{@{}>{\hspre}l<{\hspost}@{}}%
\column{E}{@{}>{\hspre}l<{\hspost}@{}}%
\>[4]{}\Varid{leftStrat'}\mathrel{=}\textsc{AddExp}\mathrel{{<}\hspace{-0.4em}\star\hspace{-0.4em}{>}}(\ell_1\;\textsc{MulExp}\mathrel{{<}\hspace{-0.4em}\mid\hspace{-0.4em}{>}}\ell_2\;\textsc{DistExp}){}\<[E]%
\ColumnHook
\end{hscode}\resethooks
or making a choice between the two labels probably breaks the relation between
the label and the strategy. Recall that labels are used to mark positions in a
strategy, which is connected to a feedback text.

A different way to deal with left-factors uses backtracking. With backtracking
we remember the start position of the left-factor in the strategy. In the case
that the chosen sub-strategy fails we roll back and continue with a different
sub-strategy. However, using backtracking it is possible to guide a student to a
dead end (a sub-strategy that fails). This violates our goals of giving proper
and relevant feedback.

Our solution is to detect and report left-factors. The detection of left-factors
in a strategy is relatively straightforward. The detection and notification
enables strategy developers to construct strategies without left-factors.

\section{Services}
\label{sec:services}
Our feedback services offer a wide variety of feedback functionality, so that
learning environments using our services can provide different kinds of
feedback. A learning environment can amongst others ask for the following kinds
of feedback: Is the student still on the right path towards a solution? Does the
step made by the student follow the strategy for the exercise? What is the next
step the student should take? Has the student made a common error? What does a
worked-out solution look like? This section formalises the corresponding 
services.

\subsection{Exercises}
\label{ssec:exercise}
Most of the feedback services calculate feedback based on a strategy, but
sometimes there are other components that play a role in our services. All
components necessary for our feedback services are grouped together in an
\emph{exercise}. An exercise contains all domain-specific (and
exercise-specific) functionality, together with an exercise code for
identification. The most important component of an exercise is its
strategy. Additional rewrite rules can be added to an exercise to help detect
possible detours. We not only specify proper rewrite rules, but also \emph{buggy
  rules}. A buggy rule captures a common misconception. If we detect an
application of a buggy rule, we report this to the user. We also need predicates
for checking whether or not an expression is a suitable starting expression that
can be solved by the strategy, and whether or not an expression is in solved
form. For diagnosing intermediate answers, we need an equivalence relation to
compare a submission with a preceding expression, and a similarity relation,
which is possibly more liberal than syntactic equality. Although not of primary
importance, it can be convenient to have a randomised expression generator for
the exercise. The last component of an exercise is a function that returns the
order of rules.

\begin{definition}
  An exercise consists of an identification code, a strategy \ensuremath{\sigma}, a rule set \ensuremath{\{\mskip1.5mu \Varid{r}_{1}\;\ldots\;\mathit{r_n}\mskip1.5mu\}}, a buggy rule set, an equivalence relation \ensuremath{\equiv }, a similarity
  relation \ensuremath{\approx}, a predicate \ensuremath{\Varid{isSuitable}}, a predicate \ensuremath{\Varid{isReady}}, an
  optional expression generator, and a rule ordering function \ensuremath{\mathbin{<}}.
\end{definition}
\noindent
An example of an exercise is the exercise for our running example: calculating 
with powers. 

\begin{hscode}\SaveRestoreHook
\column{B}{@{}>{\hspre}l<{\hspost}@{}}%
\column{3}{@{}>{\hspre}l<{\hspost}@{}}%
\column{21}{@{}>{\hspre}c<{\hspost}@{}}%
\column{21E}{@{}l@{}}%
\column{24}{@{}>{\hspre}l<{\hspost}@{}}%
\column{34}{@{}>{\hspre}l<{\hspost}@{}}%
\column{60}{@{}>{\hspre}l<{\hspost}@{}}%
\column{76}{@{}>{\hspre}l<{\hspost}@{}}%
\column{E}{@{}>{\hspre}l<{\hspost}@{}}%
\>[3]{}\Varid{powerExercise}\mathrel{=}{}\<[21]%
\>[21]{}({}\<[21E]%
\>[24]{}{\tt powerExercise}{}\<[E]%
\\
\>[21]{},{}\<[21E]%
\>[24]{}\ell\;(\Varid{repeat}\;(\Varid{bottomUp}\;(\textsc{AddExp}\mathrel{{<}\hspace{-0.4em}\mid\hspace{-0.4em}{>}}\textsc{MulExp}\mathrel{{<}\hspace{-0.4em}\mid\hspace{-0.4em}{>}}\textsc{DistExp}))){}\<[E]%
\\
\>[21]{},{}\<[21E]%
\>[24]{}\{\mskip1.5mu {\Varid{a}^{\Varid{x}}}\mathrel{=}\frac{\mathrm{1}}{{\Varid{a}^{\mathbin{-}\Varid{x}}}}\;\quad\;[\mskip1.5mu \textsc{ReciExp}\mskip1.5mu]\mskip1.5mu\}{}\<[E]%
\\
\>[21]{},{}\<[21E]%
\>[24]{}\{\mskip1.5mu {\Varid{a}^{\Varid{x}}}\cdot{\Varid{a}^{\Varid{y}}}\mathrel{=}{\Varid{a}^{\Varid{x}\cdot\Varid{y}}}\;\quad\;{}\<[76]%
\>[76]{}[\mskip1.5mu \textsc{BugAddExp}\mskip1.5mu]\mskip1.5mu\}{}\<[E]%
\\
\>[21]{},{}\<[21E]%
\>[24]{}\Varid{eqPower},{}\<[34]%
\>[34]{}\Varid{simPower},\Varid{suitablePower},{}\<[60]%
\>[60]{}\Varid{readyPower}{}\<[E]%
\\
\>[21]{},{}\<[21E]%
\>[24]{}\textsc{DistExp}\mathbin{<}\textsc{MulExp}\mathbin{<}\textsc{AddExp}){}\<[E]%
\ColumnHook
\end{hscode}\resethooks
Here, \ensuremath{\Varid{eqPower}}, \ensuremath{\Varid{simPower}}, \ensuremath{\Varid{suitablePower}}, and \ensuremath{\Varid{readyPower}} are defined as
follows, where \ensuremath{\simeq} stands for syntactic equality:
\begin{hscode}\SaveRestoreHook
\column{B}{@{}>{\hspre}l<{\hspost}@{}}%
\column{3}{@{}>{\hspre}l<{\hspost}@{}}%
\column{20}{@{}>{\hspre}l<{\hspost}@{}}%
\column{40}{@{}>{\hspre}l<{\hspost}@{}}%
\column{46}{@{}>{\hspre}l<{\hspost}@{}}%
\column{E}{@{}>{\hspre}l<{\hspost}@{}}%
\>[3]{}\Varid{normPower}\;\Varid{v}{}\<[40]%
\>[40]{}\mathrel{=}\Varid{v}{}\<[E]%
\\
\>[3]{}\Varid{normPower}\;(\Varid{e}_{1}\cdot\Varid{e}_{2}){}\<[40]%
\>[40]{}\mathrel{=}\Varid{simplifyPower}\;(\Varid{normPower}\;\Varid{e}_{1}\cdot\Varid{normPower}\;\Varid{e}_{2}){}\<[E]%
\\
\>[3]{}\Varid{normPower}\;({\Varid{e}^{\Varid{n}}}){}\<[40]%
\>[40]{}\mathrel{=}\Varid{simplifyPower}\;({(\Varid{normPower}\;\Varid{e})^{\Varid{n}}}){}\<[E]%
\\[\blanklineskip]%
\>[3]{}\Varid{simplifyPower}\;({({\Varid{a}^{\Varid{x}}})^{\Varid{y}}}){}\<[46]%
\>[46]{}\mathrel{=}{\Varid{a}^{\Varid{x}\cdot\Varid{y}}}{}\<[E]%
\\
\>[3]{}\Varid{simplifyPower}\;({\Varid{a}^{\Varid{x}}}\cdot{\Varid{a}^{\Varid{y}}}){}\<[46]%
\>[46]{}\mathrel{=}{\Varid{a}^{\Varid{x}\mathbin{+}\Varid{y}}}{}\<[E]%
\\
\>[3]{}\Varid{simplifyPower}\;({(\Varid{a}\cdot\Varid{b})^{\Varid{x}}}){}\<[46]%
\>[46]{}\mathrel{=}{\Varid{a}^{\Varid{x}}}\cdot{\Varid{b}^{\Varid{x}}}{}\<[E]%
\\
\>[3]{}\Varid{simplifyPower}\;\Varid{e}{}\<[46]%
\>[46]{}\mathrel{=}\Varid{e}{}\<[E]%
\\[\blanklineskip]%
\>[3]{}\Varid{eqPower}\;\Varid{e}_{1}\;\Varid{e}_{2}{}\<[20]%
\>[20]{}\mathrel{=}\Varid{normPower}\;\Varid{e}_{1}\simeq\Varid{normPower}\;\Varid{e}_{2}{}\<[E]%
\\[\blanklineskip]%
\>[3]{}\Varid{simPower}\;\Varid{e}_{1}\;\Varid{e}_{2}{}\<[20]%
\>[20]{}\mathrel{=}\Varid{e}_{1}\simeq\Varid{e}_{2}{}\<[E]%
\\[\blanklineskip]%
\>[3]{}\Varid{suitablePower}\;\Varid{e}{}\<[20]%
\>[20]{}\mathrel{=}\Varid{normPower}\;\Varid{e}\;\not\simeq\;\Varid{e}{}\<[E]%
\\[\blanklineskip]%
\>[3]{}\Varid{readyPower}\;\Varid{e}{}\<[20]%
\>[20]{}\mathrel{=}\Varid{normPower}\;\Varid{e}\simeq\Varid{e}{}\<[E]%
\ColumnHook
\end{hscode}\resethooks
We have simplified the implementation of \ensuremath{\Varid{normPower}} by ignoring the fact that
\ensuremath{\cdot} is associative. 
In this example, we have chosen to use normalisation as a
base to define the exercise-specific functions, such as \ensuremath{\equiv }. However, this is
not necessarily the case for other domains.

\subsection{Formalised Services}
\label{ssec:formalised-services}
We define the set of services we offer to learning environments in terms of the
relations defined in the previous section.

\begin{description}
\item[allfirsts.] 
The \ensuremath{\Varid{allfirsts}} service returns all next steps that are allowed by a strategy in 
a particular state:
\begin{hscode}\SaveRestoreHook
\column{B}{@{}>{\hspre}l<{\hspost}@{}}%
\column{3}{@{}>{\hspre}l<{\hspost}@{}}%
\column{E}{@{}>{\hspre}l<{\hspost}@{}}%
\>[3]{}\Varid{allfirsts}\;\Conid{S}_{0}\mathrel{=}\{\mskip1.5mu (\Conid{S},\Varid{r})\mid \Conid{S}_{0}\;{\buildrel \Varid{r} \over \twoheadrightarrow}\;\Conid{S}\mskip1.5mu\}{}\<[E]%
\ColumnHook
\end{hscode}\resethooks
Consider the following state \ensuremath{\Conid{S}}:
\begin{hscode}\SaveRestoreHook
\column{B}{@{}>{\hspre}l<{\hspost}@{}}%
\column{3}{@{}>{\hspre}l<{\hspost}@{}}%
\column{59}{@{}>{\hspre}l<{\hspost}@{}}%
\column{E}{@{}>{\hspre}l<{\hspost}@{}}%
\>[3]{}\Conid{S}\mathrel{=}(\Gamma,{\llbracket{({\Varid{a}^{\mathrm{3}}}\cdot{\Varid{a}^{\mathrm{4}}})^{\mathrm{2}}}\rrbracket},{}\<[59]%
\>[59]{}((\Varid{somewhere}\;\textsc{AddExp})\mathrel{{<}\hspace{-0.4em}\star\hspace{-0.4em}{>}}\textsc{MulExp})\mathrel{{<}\hspace{-0.4em}\mid\hspace{-0.4em}{>}}{}\<[E]%
\\
\>[59]{}(\textsc{DistExp}\mathrel{{<}\hspace{-0.4em}\star\hspace{-0.4em}{>}}\Varid{repeat}\;\textsc{MulExp}\mathrel{{<}\hspace{-0.4em}\star\hspace{-0.4em}{>}}\textsc{AddExp})){}\<[E]%
\ColumnHook
\end{hscode}\resethooks
An \ensuremath{\Varid{allfirsts}\;\Conid{S}} service call gives the following result. The \ensuremath{\textsc{Up}} minor rule is
introduced by the somewhere combinator to get the focus back to its original
place.
\begin{hscode}\SaveRestoreHook
\column{B}{@{}>{\hspre}l<{\hspost}@{}}%
\column{4}{@{}>{\hspre}c<{\hspost}@{}}%
\column{4E}{@{}l@{}}%
\column{7}{@{}>{\hspre}l<{\hspost}@{}}%
\column{74}{@{}>{\hspre}l<{\hspost}@{}}%
\column{102}{@{}>{\hspre}l<{\hspost}@{}}%
\column{103}{@{}>{\hspre}l<{\hspost}@{}}%
\column{111}{@{}>{\hspre}l<{\hspost}@{}}%
\column{112}{@{}>{\hspre}c<{\hspost}@{}}%
\column{112E}{@{}l@{}}%
\column{E}{@{}>{\hspre}l<{\hspost}@{}}%
\>[4]{}\{\mskip1.5mu {}\<[4E]%
\>[7]{}((\Gamma,{({\llbracket{\Varid{a}^{\mathrm{7}}}\rrbracket})^{\mathrm{2}}},{}\<[74]%
\>[74]{}\textsc{Up}\mathrel{{<}\hspace{-0.4em}\star\hspace{-0.4em}{>}}\textsc{MulExp}),{}\<[103]%
\>[103]{}\textsc{AddExp}{}\<[112]%
\>[112]{}){}\<[112E]%
\\
\>[4]{},{}\<[4E]%
\>[7]{}((\Gamma,{\llbracket{({\Varid{a}^{\mathrm{3}}})^{\mathrm{2}}}\cdot{({\Varid{a}^{\mathrm{4}}})^{\mathrm{2}}}\rrbracket},{}\<[74]%
\>[74]{}\Varid{repeat}\;\textsc{MulExp}\mathrel{{<}\hspace{-0.4em}\star\hspace{-0.4em}{>}}\textsc{AddExp}),{}\<[102]%
\>[102]{}\textsc{DistExp}{}\<[111]%
\>[111]{})\mskip1.5mu\}{}\<[E]%
\ColumnHook
\end{hscode}\resethooks

\item[onefirst.] 
The \ensuremath{\Varid{onefirst}} service returns a single possible next step that follows the 
strategy. This service uses the \ensuremath{\Varid{allfirsts}} service and the rule ordering.
\begin{hscode}\SaveRestoreHook
\column{B}{@{}>{\hspre}l<{\hspost}@{}}%
\column{3}{@{}>{\hspre}l<{\hspost}@{}}%
\column{31}{@{}>{\hspre}l<{\hspost}@{}}%
\column{E}{@{}>{\hspre}l<{\hspost}@{}}%
\>[3]{}\Varid{onefirst}\;\Conid{S}_{0}\mathrel{=}(\Conid{S},\Varid{r})\;\mathbf{where}\;{}\<[31]%
\>[31]{}(\Conid{S},\Varid{r})\;\mathrel{\in}\;\Varid{allfirsts}\;\Conid{S}_{0}\;\wedge\;\forall\;(\mathit{S_i},\mathit{r_i})\;\mathrel{\in}\;\Varid{allfirsts}\;\Conid{S}_{0}\;.\;\Varid{r}\;\leq\;\mathit{r_i}{}\<[E]%
\ColumnHook
\end{hscode}\resethooks
Performing a \ensuremath{\Varid{onefirst}} service call on the state \ensuremath{\Conid{S}} from the previous
paragraph, taking into account the rule ordering from
subsection~\ref{ssec:non-determinism}, gives following result:
\begin{hscode}\SaveRestoreHook
\column{B}{@{}>{\hspre}l<{\hspost}@{}}%
\column{3}{@{}>{\hspre}l<{\hspost}@{}}%
\column{E}{@{}>{\hspre}l<{\hspost}@{}}%
\>[3]{}((\Gamma,{({\llbracket{\Varid{a}^{\mathrm{7}}}\rrbracket})^{\mathrm{2}}},\textsc{Up}\mathrel{{<}\hspace{-0.4em}\star\hspace{-0.4em}{>}}\textsc{MulExp}),\textsc{AddExp}){}\<[E]%
\ColumnHook
\end{hscode}\resethooks
\item[derivation.]
The \ensuremath{\Varid{derivation}} service returns a worked-out solution of an exercise starting
with the current expression.
\begin{hscode}\SaveRestoreHook
\column{B}{@{}>{\hspre}l<{\hspost}@{}}%
\column{3}{@{}>{\hspre}l<{\hspost}@{}}%
\column{20}{@{}>{\hspre}l<{\hspost}@{}}%
\column{E}{@{}>{\hspre}l<{\hspost}@{}}%
\>[3]{}\Varid{derivation}\;\Conid{S}_{0}\mathrel{=}{}\<[20]%
\>[20]{}(\Conid{S}_{1},\Varid{r}_{1})\;(\Conid{S}_{2},\Varid{r}_{2})\;\ldots\;(\mathit{S_n},\mathit{r_n})\;\mathbf{where}\;\Varid{empty}\;(\mathit{S_n})\;\wedge{}\<[E]%
\\
\>[20]{}\forall\;\Varid{i}\;\mathrel{\in}\;\mathrm{1}\;\ldots\;\Varid{n}\;.\;(\mathit{S_{i-1}},\mathit{r_i})\mathrel{=}\Varid{onefirst}\;\mathit{S_i}{}\<[E]%
\ColumnHook
\end{hscode}\resethooks
\item[ready.]
The \ensuremath{\Varid{ready}} service checks if the expression in a state is in a form accepted as
a final answer. The \ensuremath{\Varid{ready}} service is an interface to the \ensuremath{\Varid{isReady}} predicate
defined in an exercise.

\item[stepsremaining.]
The \ensuremath{\Varid{stepsremaining}} service computes how many steps remain to be done according
to the strategy. This is achieved by calculating the length of a derivation.

\item[apply.]
The \ensuremath{\Varid{apply}} service applies a rule to an expression at a particular location,
regardless of the strategy. The location is represented as a list of integers,
where each integer \ensuremath{\Varid{n}} represents the number of steps to the right after a step
downwards in a subexpression (the \ensuremath{\Varid{n}}th child). Starting at the root of an
expression we can assign every subexpression a unique location.

The function \ensuremath{\Varid{setFocus}} translates a location to a sequence of minor rules that
puts the focus at a particular subexpression.
\begin{hscode}\SaveRestoreHook
\column{B}{@{}>{\hspre}l<{\hspost}@{}}%
\column{3}{@{}>{\hspre}l<{\hspost}@{}}%
\column{25}{@{}>{\hspre}l<{\hspost}@{}}%
\column{E}{@{}>{\hspre}l<{\hspost}@{}}%
\>[3]{}\Varid{setFocus}\;\Conid{S}_{0}\;[\mskip1.5mu \mskip1.5mu]{}\<[25]%
\>[25]{}\mathrel{=}\Conid{S}_{0}{}\<[E]%
\\
\>[3]{}\Varid{setFocus}\;\Conid{S}_{0}\;(\Varid{n}\mathbin{:}\Varid{ns}){}\<[25]%
\>[25]{}\mathrel{=}\Varid{setFocus}\;\Conid{S}_{1}\;\Varid{ns}\;\mathbf{where}\;\Conid{S}_{0}\;\stackrel{(\textsc{Down}\;(\Varid{const}\;\Varid{n}))}{\rightarrow}\;\Conid{S}_{1}{}\<[E]%
\ColumnHook
\end{hscode}\resethooks
The function \ensuremath{\Varid{focusToRoot}} sets the focus to the root of an expression. We omit
the definition of this function. The \ensuremath{\Varid{apply}} service is defined as follows:
\begin{hscode}\SaveRestoreHook
\column{B}{@{}>{\hspre}l<{\hspost}@{}}%
\column{3}{@{}>{\hspre}l<{\hspost}@{}}%
\column{30}{@{}>{\hspre}l<{\hspost}@{}}%
\column{E}{@{}>{\hspre}l<{\hspost}@{}}%
\>[3]{}\Varid{apply}\;\Varid{r}\;\Varid{loc}\;\Conid{S}_{0}\mathrel{=}\Conid{S}_{1}\;\mathbf{where}\;{}\<[30]%
\>[30]{}\Varid{setFocus}\;\Varid{loc}\;(\Varid{focusToRoot}\;\Conid{S}_{0})\;{\buildrel \Varid{r} \over \rightarrow}\;\Conid{S}_{1}{}\<[E]%
\ColumnHook
\end{hscode}\resethooks
For example, \ensuremath{\Varid{apply}\;\textsc{MulExp}\;[\mskip1.5mu \mathrm{1}\mskip1.5mu]\;(\Gamma,{\llbracket{({\Varid{a}^{\mathrm{3}}})^{\mathrm{2}}}\cdot{({\Varid{a}^{\mathrm{4}}})^{\mathrm{2}}}\rrbracket},\sigma)} gives \ensuremath{(\Gamma,({({\Varid{a}^{\mathrm{3}}})^{\mathrm{2}}}\cdot{\llbracket{\Varid{a}^{\mathrm{8}}}\rrbracket}),\sigma)}.

\item[applicable.]
The \ensuremath{\Varid{applicable}} service takes an expression and a location in this expression,
and returns all major rules that can be applied at this location, independent of
the strategy. Let \ensuremath{\Conid{R}} be the union of the rules in the strategy and the exercise 
rule set, then \ensuremath{\Varid{applicable}} is defined as follows:
\begin{hscode}\SaveRestoreHook
\column{B}{@{}>{\hspre}l<{\hspost}@{}}%
\column{3}{@{}>{\hspre}l<{\hspost}@{}}%
\column{E}{@{}>{\hspre}l<{\hspost}@{}}%
\>[3]{}\Varid{applicable}\;\Varid{loc}\;\Conid{S}_{0}\mathrel{=}\{\mskip1.5mu \Varid{r}\mid \Varid{r}\;\mathrel{\in}\;\Conid{R},\Conid{S}_{1}\;{\buildrel \Varid{r} \over \rightarrow}\;\Conid{S}_{2}\mskip1.5mu\}\;\mathbf{where}\;\Conid{S}_{1}\mathrel{=}\Varid{setFocus}\;\Varid{loc}\;(\Varid{focusToRoot}\;\Conid{S}_{0}){}\<[E]%
\ColumnHook
\end{hscode}\resethooks
\item[generate.]
The \ensuremath{\Varid{generate}} service takes an exercise code and a difficulty level (optional),
and returns an initial state with a freshly generated expression.

\item[diagnose.]
The \ensuremath{\Varid{diagnose}} service diagnoses an expression submitted by a student. Possible
diagnoses are:
\begin{itemize}
\item \ensuremath{\Conid{NotEq}}: the current and submitted expression are not equivalent, i.e.
  something is wrong,
\item \ensuremath{\Conid{Buggy}}: a common misconception has been detected,
\item \ensuremath{\Conid{Similar}}: the expression is similar to the last expression in the
  derivation,
\item \ensuremath{\Conid{Expected}}: the submitted expression is expected by the strategy,
\item \ensuremath{\Conid{Detour}}: the submitted expression was not expected by the strategy, but
  the applied rule was detected,
\item \ensuremath{\Conid{Correct}}: the submitted expression is correct, but we cannot determine
  which rule was applied.
\end{itemize}
\noindent
The \ensuremath{\Varid{diagnose}} service is defined as follows, where the \ensuremath{\Varid{unFocus}} function
converts a focussed expression to a normal expression:
\begin{hscode}\SaveRestoreHook
\column{B}{@{}>{\hspre}l<{\hspost}@{}}%
\column{3}{@{}>{\hspre}l<{\hspost}@{}}%
\column{5}{@{}>{\hspre}l<{\hspost}@{}}%
\column{8}{@{}>{\hspre}l<{\hspost}@{}}%
\column{94}{@{}>{\hspre}l<{\hspost}@{}}%
\column{100}{@{}>{\hspre}l<{\hspost}@{}}%
\column{110}{@{}>{\hspre}l<{\hspost}@{}}%
\column{E}{@{}>{\hspre}l<{\hspost}@{}}%
\>[3]{}\Varid{diagnose}\;(\Gamma,\varphi ,\sigma)\;\Varid{new}\mathrel{=}{}\<[E]%
\\
\>[3]{}\hsindent{2}{}\<[5]%
\>[5]{}\mathbf{if}\;\Varid{unFocus}\;\varphi \;\not\equiv\;\Varid{new}\;{}\<[94]%
\>[94]{}\mathbf{then}{}\<[E]%
\\
\>[5]{}\hsindent{3}{}\<[8]%
\>[8]{}\mathbf{if}\;\exists\;\Varid{b}\;\mathrel{\in}\;\Varid{buggyRuleSet}\;.\;\exists\;\Varid{e}\;.\;\Varid{unFocus}\;\varphi \;\stackrel{\Varid{b}}{\leadsto}\;\Varid{e}\;\wedge\;\Varid{e}\equiv \Varid{new}\;{}\<[94]%
\>[94]{}\mathbf{then}\;{}\<[100]%
\>[100]{}\Conid{Buggy}{}\<[E]%
\\
\>[94]{}\mathbf{else}\;{}\<[100]%
\>[100]{}\Conid{NotEq}\;{}\<[110]%
\>[110]{}\mathbf{else}{}\<[E]%
\\
\>[3]{}\hsindent{2}{}\<[5]%
\>[5]{}\mathbf{if}\;\Varid{unFocus}\;\varphi \;\approx\;\Varid{new}\;{}\<[94]%
\>[94]{}\mathbf{then}\;{}\<[100]%
\>[100]{}\Conid{Similar}\;{}\<[110]%
\>[110]{}\mathbf{else}{}\<[E]%
\\
\>[3]{}\hsindent{2}{}\<[5]%
\>[5]{}\mathbf{if}\;\Varid{new}\;\mathrel{\in}\;\Varid{allfirsts}\;(\Gamma,\varphi ,\sigma)\;{}\<[94]%
\>[94]{}\mathbf{then}\;{}\<[100]%
\>[100]{}\Conid{Expected}\;{}\<[110]%
\>[110]{}\mathbf{else}{}\<[E]%
\\
\>[3]{}\hsindent{2}{}\<[5]%
\>[5]{}\mathbf{if}\;\exists\;\Varid{r}\;\mathrel{\in}\;\Varid{ruleSet}\;.\;\exists\;\Varid{e}\;.\;\Varid{unFocus}\;\varphi \;{\buildrel \Varid{r} \over \leadsto}\;\Varid{e}\;\wedge\;\Varid{e}\equiv \Varid{new}\;{}\<[94]%
\>[94]{}\mathbf{then}\;{}\<[100]%
\>[100]{}\Conid{Detour}{}\<[E]%
\\
\>[94]{}\mathbf{else}\;{}\<[100]%
\>[100]{}\Conid{Correct}{}\<[E]%
\ColumnHook
\end{hscode}\resethooks
\end{description}

We conclude this section with a soundness result. We give a theorem connecting
the language concept, the function \ensuremath{\Conid{L}}, to the services built on top of
strategies defined in this section. We start with the introduction of a lemma
that simplifies the proof of the theorem.

\begin{lemma}
\label{lem:split}
  A major language \ensuremath{\mathcal{L}} is the language of a strategy without occurrences of 
  minor rules:
\begin{hscode}\SaveRestoreHook
\column{B}{@{}>{\hspre}l<{\hspost}@{}}%
\column{3}{@{}>{\hspre}l<{\hspost}@{}}%
\column{E}{@{}>{\hspre}l<{\hspost}@{}}%
\>[3]{}\mathcal{L}\;(\sigma)\mathrel{=}\{\mskip1.5mu [\mskip1.5mu \Varid{a}\mid \Varid{a}\leftarrow \Varid{as},\neg \;\Varid{isMinor}\;\Varid{a}\mskip1.5mu]\mid \Varid{as}\;\mathrel{\in}\;\Conid{L}\;(\sigma)\mskip1.5mu\}{}\<[E]%
\ColumnHook
\end{hscode}\resethooks
  Let F be the set of all splits given by the split relation for \ensuremath{\sigma}:
  \ensuremath{\{\mskip1.5mu \sigma_i\mid \sigma\mapsto\sigma_i\mskip1.5mu\}}. Then the major language of \ensuremath{\sigma} without the empty sentence is
  equal to the union of major languages of the elements in F:
\begin{hscode}\SaveRestoreHook
\column{B}{@{}>{\hspre}l<{\hspost}@{}}%
\column{3}{@{}>{\hspre}l<{\hspost}@{}}%
\column{E}{@{}>{\hspre}l<{\hspost}@{}}%
\>[3]{}\mathcal{L}\;(\sigma)\;\setminus\;\{\mskip1.5mu \epsilon \mskip1.5mu\}\equiv \bigcup\limits_{\sigma_f \in F}\;\mathcal{L}\;(\sigma_f){}\<[E]%
\ColumnHook
\end{hscode}\resethooks
\end{lemma}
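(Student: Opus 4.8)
The plan is to reduce the statement to a first-atom decomposition of the underlying language \ensuremath{\Conid{L}} and then push the minor-rule filter through. Write $\pi$ for the map $as \mapsto [\,a \mid a \leftarrow as,\ \neg\,\Varid{isMinor}\,a\,]$, so that $\mathcal{L}(\sigma)$ is the image of $\Conid{L}(\sigma)$ under $\pi$. I would first record the routine facts that $\pi$ distributes over the language operations: $\pi(A \cup B) = \pi(A) \cup \pi(B)$, $\pi(A\cdot B) = \pi(A)\cdot\pi(B)$ for concatenation, and $\pi(\{a\})$ equals $\{a\}$ when $a$ is major and $\{\epsilon\}$ when $a$ is minor. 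I would also flag at the outset that the asserted equality must be read up to the empty sentence: a non-empty sentence built entirely from minor rules (already a single minor rule) is filtered to $\epsilon$, so $\epsilon$ can occur on the right while being excluded on the left. Hence I interpret $\equiv$ as equality after deleting $\epsilon$ from both sides, which is the form that actually holds.

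The core is a case analysis on $\sigma$ matching the clauses of Definition~\ref{def:split}, proving the two inclusions. For $\supseteq$ I would establish \emph{major-soundness} of the split relation, namely that $\sigma \mapsto \sigma_f$ implies $\mathcal{L}(\sigma_f) \subseteq \mathcal{L}(\sigma)$, by induction on the derivation of the split. The base clause $a \mapsto a \mathbin{<\star>} \gamma$ gives $\mathcal{L}(a\mathbin{<\star>}\gamma) = \pi(\{a\})$; the sequence and choice clauses follow from the distributivity facts and the induction hypothesis on $\sigma_1$ or $\sigma_2$. For $\subseteq$ I would argue \emph{major-completeness}: every non-empty element of $\mathcal{L}(\sigma)$ is $\pi(as)$ for some $as \in \Conid{L}(\sigma)$ containing at least one major rule, and (for all clauses other than labels) peeling the leftmost atom of $as$ exhibits a split $\sigma \mapsto a \mathbin{<\star>} \sigma'$ with the remainder in $\Conid{L}(\sigma')$, so that $\pi(as) \in \mathcal{L}(a\mathbin{<\star>}\sigma')$. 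The one genuinely fiddly clause is the sequence $\sigma_1 \mathbin{<\star>} \sigma_2$, where whether $\epsilon \in \Conid{L}(\sigma_1)$ is exactly what separates the two sequence clauses of Definition~\ref{def:split}; tracking nullability splits $\Conid{L}(\sigma_1)\cdot\Conid{L}(\sigma_2)$ correctly into the contribution of splits of $\sigma_1$ (each extended on the right by $\sigma_2$) together with, when $\sigma_1$ is nullable, the contribution of splits of $\sigma_2$.

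The two steps I expect to be the real obstacles are the label clause and the fixed point. For a label, $\ell\,\sigma'$ splits to a strategy whose leading atom is the administrative rule \textsc{Enter} and whose trailing atom is \textsc{Leave}; this strictly \emph{enlarges} the raw language, so the decomposition is simply false at the level of \ensuremath{\Conid{L}} and becomes true only after filtering. Since \textsc{Enter} and \textsc{Leave} are minor, $\pi$ erases them, and $\mathcal{L}$ of the unique label-split equals $\mathcal{L}(\sigma') = \mathcal{L}(\ell\,\sigma')$; this both discharges completeness for labels directly (the single split already recovers the whole major language, so no atom-peeling is needed) and explains why the lemma is phrased for the major language rather than for \ensuremath{\Conid{L}}. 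For the fixed point, both $\Conid{L}(\mu\,f) = \Conid{L}(f\,(\mu\,f))$ and the split clause for $\mu\,f$ defer to $f\,(\mu\,f)$, which is not a syntactic subterm, so plain structural induction does not close. I would therefore run the $\subseteq$ inclusion by induction on the length of $as$, invoking the no-left-recursion restriction of Section~\ref{ssec:restrictions} to guarantee that the leftmost atom of any sentence is produced after finitely many unfoldings of $\mu\,f$, so that peeling it strictly shortens the remainder and the induction is well founded; the $\supseteq$ inclusion stays an induction on the finite derivation of the split. Getting this well-foundedness argument and the label-filtering interplay right is where the main work lies.
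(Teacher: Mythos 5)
Your proposal follows the same route the paper takes---the paper's entire proof of this lemma is the single line ``By case analysis on the structure of $\sigma$''---so your case analysis on $\sigma$ against the clauses of the split relation is exactly the intended argument. The subtleties you isolate (reading the equality modulo the empty sentence $\epsilon$, the erasure of \textsc{Enter} and \textsc{Leave} by the major-language filter in the label clause, and the appeal to the no-left-recursion restriction to make the induction through $\mu\,f$ well founded) are all genuine and are precisely what the paper's one-line sketch leaves implicit.
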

\begin{proof}
By case analysis on the structure of \ensuremath{\sigma}.
\end{proof}

\begin{thm}
  Let \ensuremath{\Conid{S}_{0}} be a state \ensuremath{(\Gamma\mathrel{\times}\varphi \mathrel{\times}\sigma)} and \ensuremath{\vec{\Varid{r}}} be a sequence of rules
  \ensuremath{[\mskip1.5mu \Varid{r}_{1}\;\ldots\;\mathit{r_n}\mskip1.5mu]} such that \ensuremath{\Varid{derivation}\;(\Conid{S}_{0})\mathrel{=}(\Conid{S}_{1},\Varid{r}_{1})\;\ldots\;(\mathit{S_n},\mathit{r_n})}. Then
  \ensuremath{\vec{\Varid{r}}\;\mathrel{\in}\;\mathcal{L}\;(\sigma)}.
\end{thm}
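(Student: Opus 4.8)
The plan is to prove the statement by induction on the length $n$ of the derivation, reading it as universally quantified over the initial state $S_0 = (\Gamma \times \varphi \times \sigma)$. First I would unfold the definitions of $\Varid{derivation}$, $\Varid{onefirst}$ and $\Varid{allfirsts}$ to observe that the output $(S_1,r_1)\ldots(S_n,r_n)$ is precisely a chain of big steps $S_0 \mathrel{{\buildrel r_1 \over \twoheadrightarrow}} S_1 \mathrel{{\buildrel r_2 \over \twoheadrightarrow}} \cdots \mathrel{{\buildrel r_n \over \twoheadrightarrow}} S_n$ in which every reported rule $r_i$ is \emph{major} (each big-step rule requires $\neg\,\Varid{isMinor}(r_i)$ of the rule it reports), and that $\Varid{empty}(S_n)$ holds. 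Since dropping the leading pair $(S_1,r_1)$ leaves a sequence that again satisfies the defining conditions, it is a valid derivation of $S_1$; the induction hypothesis then applies to it and yields $[r_2 \ldots r_n] \in \mathcal{L}(\sigma_1)$, where $\sigma_1$ is the strategy component of $S_1$.

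The heart of the argument is a lemma relating a single big step to the major language: if $(\Gamma \times \varphi \times \sigma) \mathrel{{\buildrel r \over \twoheadrightarrow}} (\Gamma' \times \varphi' \times \sigma')$ with $r$ major, then $\{[r]\} \cdot \mathcal{L}(\sigma') \subseteq \mathcal{L}(\sigma)$, where $\cdot$ prepends $[r]$ to every word of the language. Granting this, the inductive step is immediate: applying the lemma to the first big step and to the word $[r_2 \ldots r_n]$ supplied by the induction hypothesis gives $[r_1 \ldots r_n] \in \mathcal{L}(\sigma)$. The base case $n = 0$ is handled directly: here $S_n = S_0$, so $\Varid{empty}(\sigma)$ holds, which unfolds to $\Varid{minorSentences}(\sigma) \neq \emptyset$; filtering the minor atoms out of any minor-only sentence of $\Conid{L}(\sigma)$ leaves the empty word $\epsilon$, so $\epsilon \in \mathcal{L}(\sigma)$, matching the empty rule sequence of the $n=0$ derivation.

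To establish the big-step lemma I would first prove a small-step version: whenever $(\Gamma \times \varphi \times \sigma) \mathrel{{\buildrel r \over \rightarrow}} (\Gamma' \times \varphi' \times \sigma')$, one has $\mathcal{L}(\sigma') \subseteq \mathcal{L}(\sigma)$ if $r$ is minor and $\{[r]\} \cdot \mathcal{L}(\sigma') \subseteq \mathcal{L}(\sigma)$ if $r$ is major. This goes by induction on the derivation of the step. In the base case the step uses a split $\sigma \mapsto r \mathrel{{<}\hspace{-0.4em}\star\hspace{-0.4em}{>}} \sigma'$, so $r \mathrel{{<}\hspace{-0.4em}\star\hspace{-0.4em}{>}} \sigma'$ is one of the elements of the set $F$ of splits and Lemma~\ref{lem:split} gives $\mathcal{L}(r \mathrel{{<}\hspace{-0.4em}\star\hspace{-0.4em}{>}} \sigma') \subseteq \mathcal{L}(\sigma)$; unfolding the definitions of $\Conid{L}$ and $\mathcal{L}$ on the sequence combinator shows that this language is $\{[r]\} \cdot \mathcal{L}(\sigma')$ when $r$ is major and $\mathcal{L}(\sigma')$ when $r$ is minor. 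In the other case the step consumes an applicability check via a split $\sigma \mapsto \mathord{\sim}\sigma_2 \mathrel{{<}\hspace{-0.4em}\star\hspace{-0.4em}{>}} \sigma_3$ and recurses into $\sigma_3$; since applicability checks are erased by $\mathcal{L}$, Lemma~\ref{lem:split} and the inner induction hypothesis give $\mathcal{L}(\sigma_3) \subseteq \mathcal{L}(\sigma)$ and close the case. The big-step lemma then follows by composing small steps along the three big-step rules: the leading minor rules and the trailing minor rules (the third rule) each contribute only an inclusion $\mathcal{L}(\cdot) \subseteq \mathcal{L}(\cdot)$ that leaves the major word untouched, while the single major rule contributes the prefix $[r]$.

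I expect the main obstacle to be the bookkeeping in the small-step lemma, specifically threading the silently consumed applicability checks — which occur as atoms of $\Conid{L}(\sigma)$ but vanish under the filtering that defines $\mathcal{L}$ — and the trailing minor rules of the final big step through Lemma~\ref{lem:split}, while keeping the language equalities for the sequence combinator aligned with the operational behaviour of $\mapsto$. A secondary point to verify carefully is that concatenation with a fixed prefix $[r]$ preserves the inclusions, so that the three-part composition of leading minors, the major rule, and trailing minors indeed telescopes to $\{[r]\} \cdot \mathcal{L}(\sigma') \subseteq \mathcal{L}(\sigma)$.
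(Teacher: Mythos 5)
Your proposal is correct and takes essentially the same route as the paper's (sketched) proof: induction on the length $n$ of the rule sequence, with the $n=0$ case discharged via \ensuremath{\Varid{empty}} and minor-only sentences, and the inductive step resting on Lemma~\ref{lem:split}. You merely make explicit the intermediate small-step and big-step lemmas that the paper's two-line sketch leaves implicit.
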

\begin{proof}
  We sketch a proof. Via the definitions of \ensuremath{\Varid{derivation}}, \ensuremath{\Varid{onefirst}}, and
  \ensuremath{\Varid{allfirsts}} we know that for every element in the derivation sequence \ensuremath{[\mskip1.5mu \Varid{r}_{1}\;\ldots\;\mathit{r_n}\mskip1.5mu]} a big step has to be performed.
  We distinguish two cases in our hypothesis \ensuremath{[\mskip1.5mu \Varid{r}_{1}\;\ldots\;\mathit{r_n}\mskip1.5mu]\;\mathrel{\in}\;\mathcal{L}\;(\sigma)}:
  \begin{description}
  \item [\ensuremath{\Varid{n}\mathrel{=}\mathrm{0}}]: From the statement above we deduce that the strategy \ensuremath{\sigma} is
    either \ensuremath{\gamma} or consists of a strategy with minor rules only. From the
    definition of the major language \ensuremath{\mathcal{L}} we know \ensuremath{\epsilon \;\mathrel{\in}\;\mathcal{L}\;(\sigma)}.
  \item [\ensuremath{\Varid{n}\mathbin{>}\mathrm{0}}]: This case is proven by induction on the length (\ensuremath{\Varid{n}}) of 
    the sequence \ensuremath{[\mskip1.5mu \Varid{r}_{1}\;\ldots\;\mathit{r_n}\mskip1.5mu]} using Lemma~\ref{lem:split}.
  \end{description}
\end{proof}

\section{Exercise Properties}
\label{sec:properties}
The following lemmas express properties that the components in an exercise
should have. The lemmas connect the various components of an exercise. The proof
of these lemmas for a particular exercise indicates the soundness of the
corresponding exercise specification.

\begin{lemma}
  Let \ensuremath{\Conid{S}_{0}} be a state \ensuremath{(\Gamma,{\llbracket\Varid{e}\rrbracket},\sigma)} and \ensuremath{\Varid{derivation}\;\Conid{S}_{0}\mathrel{=}(\Conid{S}_{1},\Varid{r}_{1})\;\ldots\;(\mathit{S_n},\mathit{r_n})}. If \ensuremath{\Varid{e}} is a suitable start expression, the expression in the last
  state \ensuremath{(\mathit{S_n})} is ready.
\end{lemma}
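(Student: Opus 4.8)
The plan is to unfold the definition of \ensuremath{\Varid{derivation}} to learn that the last state \ensuremath{\mathit{S_n}} is terminal, and then to translate "terminal" into a statement about the expression via the exercise's \ensuremath{\Varid{isReady}} predicate. From \ensuremath{\Varid{derivation}\;\Conid{S}_{0}\mathrel{=}(\Conid{S}_{1},\Varid{r}_{1})\;\ldots\;(\mathit{S_n},\mathit{r_n})}, the defining \textbf{where}-clause tells us that every pair \ensuremath{(\mathit{S_i},\mathit{r_i})} arises from a \ensuremath{\Varid{onefirst}} call, so the sequence is a complete greedy run ending in a state with \ensuremath{\Varid{empty}\;(\mathit{S_n})}. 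Since \ensuremath{\Varid{onefirst}} merely selects, by the rule ordering, from \ensuremath{\Varid{allfirsts}\;(\Conid{S})\mathrel{=}\{\mskip1.5mu (\Conid{S}',\Varid{r})\mid \Conid{S}\;{\buildrel \Varid{r} \over \twoheadrightarrow}\;\Conid{S}'\mskip1.5mu\}}, the run terminating at \ensuremath{\mathit{S_n}} is equivalent to \ensuremath{\Varid{allfirsts}\;(\mathit{S_n})\mathrel{=}\emptyset}. First I would argue that, by the definition of the big step \ensuremath{{\twoheadrightarrow}}, every big step emits exactly one \emph{major} rule \ensuremath{\Varid{r}} (its emitting rules require \ensuremath{\neg \;\Varid{isMinor}\;(\Varid{r})}), so \ensuremath{\Varid{allfirsts}\;(\mathit{S_n})\mathrel{=}\emptyset} means precisely that no major rule admitted by the strategy of \ensuremath{\mathit{S_n}} is applicable anywhere in the expression \ensuremath{\mathit{e_n}} of \ensuremath{\mathit{S_n}}.

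Next I would connect "no applicable major rule" to \ensuremath{\Varid{isReady}}; this is the exercise-specific step. For the running power exercise the strategy is \ensuremath{\ell\;(\Varid{repeat}\;(\Varid{bottomUp}\;(\textsc{AddExp}\mathrel{{<}\hspace{-0.4em}\mid\hspace{-0.4em}{>}}\textsc{MulExp}\mathrel{{<}\hspace{-0.4em}\mid\hspace{-0.4em}{>}}\textsc{DistExp})))}, whose only major rules are \ensuremath{\textsc{AddExp}}, \ensuremath{\textsc{MulExp}}, and \ensuremath{\textsc{DistExp}}. The function \ensuremath{\Varid{normPower}} rewrites with exactly these three identities, through \ensuremath{\Varid{simplifyPower}}, applied bottom-up to every subexpression. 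Hence "none of \ensuremath{\textsc{AddExp}}, \ensuremath{\textsc{MulExp}}, \ensuremath{\textsc{DistExp}} is applicable anywhere in \ensuremath{\mathit{e_n}}" is exactly the statement that \ensuremath{\Varid{normPower}} leaves \ensuremath{\mathit{e_n}} unchanged, i.e. \ensuremath{\Varid{normPower}\;\mathit{e_n}\simeq\mathit{e_n}}. Since \ensuremath{\Varid{readyPower}\;\mathit{e_n}} is \emph{defined} as \ensuremath{\Varid{normPower}\;\mathit{e_n}\simeq\mathit{e_n}}, the expression \ensuremath{\mathit{e_n}} is ready, which is the conclusion.

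Finally, the hypothesis that \ensuremath{\Varid{e}} is suitable, i.e. \ensuremath{\Varid{normPower}\;\Varid{e}\;\not\simeq\;\Varid{e}}, guarantees that \ensuremath{\Varid{e}} is not already normal, so the derivation performs at least one big step (\ensuremath{\Varid{n}\geq\mathrm{1}}) and is genuinely normalising; note that readiness of \ensuremath{\mathit{e_n}} itself then follows purely from termination of the chain. I expect the main obstacle to be the claim of the first paragraph — that termination of the \ensuremath{\Varid{onefirst}} chain coincides exactly with non-applicability of every major rule. Establishing it requires unfolding the semantics of \ensuremath{\Varid{repeat}}, \ensuremath{\Varid{bottomUp}}, \ensuremath{\Varid{once}}, and the \ensuremath{\mapsto}/step/big-step relations to confirm that \ensuremath{\Varid{allfirsts}\;(\Conid{S})} is non-empty exactly when some major rule fires at some focus of the expression. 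Abstractly, this is the requirement that the strategy's major rules and \ensuremath{\Varid{isReady}} be aligned — that \ensuremath{\Varid{isReady}\;(\Varid{e})} hold iff no major rule of \ensuremath{\sigma} applies to \ensuremath{\Varid{e}} — and checking it for a given exercise is exactly what "indicates the soundness of the exercise specification."
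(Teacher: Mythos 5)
The paper gives no proof of this lemma in the text: it defers all proofs in Section~4 to an accompanying technical report, offering only the remark that they are ``rather easy'' because \emph{the rules used in the strategy are the same as the rules used to calculate the normal form of a power expression}. That remark is exactly the central idea of your second paragraph (the major rules \textsc{AddExp}, \textsc{MulExp}, \textsc{DistExp} are precisely the rewrites performed by \ensuremath{\Varid{simplifyPower}} inside \ensuremath{\Varid{normPower}}, and \ensuremath{\Varid{readyPower}} is defined as \ensuremath{\Varid{normPower}} being the identity), so your proposal is essentially the paper's intended argument, worked out in more detail. One caveat: you identify termination of the \ensuremath{\Varid{onefirst}} chain with \ensuremath{\Varid{allfirsts}\;(\mathit{S_n})\mathrel{=}\emptyset}, whereas the paper's \ensuremath{\Varid{derivation}} stops when \ensuremath{\Varid{empty}\;(\mathit{S_n})} holds, a syntactic condition on \ensuremath{\Varid{minorSentences}}; for the \ensuremath{\Varid{repeat}}/\ensuremath{\Varid{try}}-based power strategy these coincide only because the terminating minor sentence begins with the applicability check \ensuremath{\mathord{\sim}\sigma}, which succeeds exactly when no major rule fires, and your proof should make that reconciliation explicit rather than asserting the equivalence outright.
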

\begin{lemma}
  Let \ensuremath{\Varid{r}} be a rewrite rule for a particular exercise: \ensuremath{\Varid{e}_{1}\;{\buildrel \Varid{r} \over \leadsto}\;\Varid{e}_{2}}. Then \ensuremath{\Varid{r}} 
  is semantics preserving, i.e., \ensuremath{\Varid{e}_{1}\equiv \Varid{e}_{2}}.
\end{lemma}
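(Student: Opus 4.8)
The plan is to specialise the statement to the running power exercise, the only exercise for which the excerpt gives a concrete equivalence. There \ensuremath{\equiv } is \ensuremath{\Varid{eqPower}}, and by definition \ensuremath{\Varid{eqPower}\;\Varid{e}_{1}\;\Varid{e}_{2}} holds exactly when \ensuremath{\Varid{normPower}\;\Varid{e}_{1}\simeq\Varid{normPower}\;\Varid{e}_{2}}. Proving the lemma therefore reduces to showing that every proper rewrite rule leaves the \ensuremath{\Varid{normPower}} normal form unchanged, i.e.\ \ensuremath{\Varid{normPower}\;\Varid{e}_{1}\simeq\Varid{normPower}\;\Varid{e}_{2}} whenever \ensuremath{\Varid{e}_{1}\;{\buildrel \Varid{r} \over \leadsto}\;\Varid{e}_{2}}. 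The rules in scope are the strategy rules \textsc{AddExp}, \textsc{MulExp} and \textsc{DistExp}, together with the extra rule \textsc{ReciExp}; the buggy rule \textsc{BugAddExp} is excluded on purpose, since by construction it encodes a misconception and is \emph{not} semantics preserving. I would then do a case analysis on \ensuremath{\Varid{r}}.

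The observation that drives every case is that the three nontrivial clauses of \ensuremath{\Varid{simplifyPower}} are exactly the left-to-right orientations of \textsc{MulExp}, \textsc{AddExp} and \textsc{DistExp}, and that \ensuremath{\Varid{normPower}} is a bottom-up traversal applying \ensuremath{\Varid{simplifyPower}} at each node. Hence, after normalising the base(s) of a redex, each side collapses to a comparison of \ensuremath{\Varid{simplifyPower}} applications, and I would sub-divide on the shape of those normalised bases. The benign cases are those in which the bases reduce to distinct variables or to powers: for \textsc{AddExp} and \textsc{DistExp} with distinct variable bases the matching \ensuremath{\Varid{simplifyPower}} clause fires on both sides and they coincide, while for \textsc{MulExp} and for power bases the two sides agree once the integer identities $n\,(x+y)=n\,x+n\,y$ and associativity of integer multiplication are used. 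These arithmetic facts are the only nontrivial ingredients in the easy cases.

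The hard part is the interaction of \textsc{AddExp} and \textsc{ReciExp} with bases that normalise to products, or with repeated bases inside a product. Because \ensuremath{\Varid{normPower}} deliberately ignores associativity of \ensuremath{\cdot} and \ensuremath{\Varid{simplifyPower}} rewrites only at the top level, the two sides of such a redex can normalise to denotationally equal but syntactically distinct forms --- for example $u^{x}\cdot u^{x}$ versus $u^{2x}$, or $(c^{x}d^{x})(c^{y}d^{y})$ versus $c^{x+y}d^{x+y}$. This mismatch, not any deep fact about exponents, is where I expect to spend the effort, and it is precisely the gap opened by the simplification noted after the definition of \ensuremath{\Varid{normPower}}. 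There are two honest ways to close it: either strengthen \ensuremath{\Varid{normPower}} to a complete, associativity-aware normal form (flattening products and collecting equal bases), after which every case reduces to an integer identity; or interpret \ensuremath{\equiv } denotationally and argue that each of the four rules is a valid law of exponents, so $e_1$ and $e_2$ denote the same function of their variables and are equivalent under any sound equivalence. I would present the structural case analysis as the main line and flag the associativity gap explicitly, since exposing exactly this sort of discrepancy between an implemented normaliser and its intended meaning is what the exercise-property lemmas are for.
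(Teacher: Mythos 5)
The paper gives no proof of this lemma in the text: it only remarks that the proofs for the power exercise ``are rather easy'' because ``the rules that are used in the strategy are the same as the rules used to calculate the normal form of a power expression,'' and defers the details to the accompanying technical report. Your plan is precisely an elaboration of that remark --- specialise $\equiv$ to \ensuremath{\Varid{eqPower}}, observe that the nontrivial clauses of \ensuremath{\Varid{simplifyPower}} are the oriented rewrite rules, and case-split on the rule and on the shape of the normalised bases --- so in approach you agree with the paper; you just carry the argument further than the paper's one-line justification does.

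The gap you flag is real, and your second example is already a genuine counterexample to the lemma \emph{as literally stated} with the printed normaliser: take $e_{1}=(a\cdot b)^{2}\cdot(a\cdot b)^{3}$ and apply \textsc{AddExp} (metavariable bound to $a\cdot b$) to obtain $e_{2}=(a\cdot b)^{5}$. Then $\Varid{normPower}\;e_{1}=(a^{2}\cdot b^{2})\cdot(a^{3}\cdot b^{3})$ whereas $\Varid{normPower}\;e_{2}=a^{5}\cdot b^{5}$, so \ensuremath{\Varid{eqPower}\;\Varid{e}_{1}\;\Varid{e}_{2}} fails. The lemma therefore holds only for the real, associativity-aware normaliser (or under a denotational reading of $\equiv$), exactly as you say, and either of your two repairs closes the hole; the paper's aside that it has ``simplified the implementation of $\Varid{normPower}$ by ignoring the fact that $\cdot$ is associative'' is doing more work than it admits. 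Two smaller points. First, \textsc{ReciExp} rewrites $a^{x}$ into $1/a^{-x}$, which is not generated by the power-domain grammar ($e$ is a variable, a power, or a product), so $\Varid{normPower}$ is undefined on its right-hand side and that case cannot be discharged without extending the expression domain --- worth flagging rather than folding into the ``easy'' cases. Second, you are right to exclude \textsc{BugAddExp}: buggy rules are kept in a separate set precisely because they are not semantics preserving, so the lemma is to be read as quantifying over the rule set and the strategy rules only.
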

\begin{corollary}
  Let \ensuremath{\Conid{S}_{0}} be a state and \ensuremath{\Varid{derivation}\;\Conid{S}_{0}\mathrel{=}(\Conid{S}_{1},\Varid{r}_{1})\;\ldots\;(\mathit{S_n},\mathit{r_n})}.  Then all
  expressions in \ensuremath{\Conid{S}_{0}} up to \ensuremath{\mathit{S_n}} are in the same equivalence class determined by
  the exercise's equivalence relation \ensuremath{\equiv }.
\end{corollary}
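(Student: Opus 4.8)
The plan is to reduce the claim to the preceding semantics-preservation lemma together with the fact that $\equiv$ is an equivalence relation, exploiting in particular its transitivity and reflexivity. First I would unfold the definition of $\mathit{derivation}$: the output $(S_1,r_1)\ldots(S_n,r_n)$ records a chain of big steps connecting $S_0$ to $S_n$, since each element of the sequence is produced by a $\mathit{onefirst}$ call, which by definition selects an element of $\mathit{allfirsts}$, i.e. a big step $S_{i-1} \twoheadrightarrow^{r_i} S_i$. Thus consecutive states in the derivation are related by $\twoheadrightarrow$.

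Next I would expand every big step into the small steps that constitute it. By the definition of $\twoheadrightarrow$, a big step from $S_{i-1}$ to $S_i$ consists of zero or more applications of minor rules, a single application of the major rule $r_i$, and possibly a sequence of trailing minor rules, each carried out by the step operator $\to$ via the underlying relation $(\Gamma,\varphi)\leadsto^{r}(\Gamma',\varphi')$. Concatenating these fragments across all $i$ yields one flat chain of small steps from $S_0$ to $S_n$ that passes through every intermediate state, and in particular through each $S_i$, in which every link applies exactly one rewrite rule to the focussed expression.

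It then suffices to show that each individual small step preserves the equivalence class of the underlying (unfocussed) expression, after which transitivity of $\equiv$ closes the argument. Here I would split on the kind of rule applied. For a major rewrite rule the preceding lemma gives directly $\mathit{unFocus}\,\varphi \equiv \mathit{unFocus}\,\varphi'$. For a navigational or administrative minor rule ($\textsc{Up}$, $\textsc{Down}$, $\textsc{Left}$, $\textsc{Right}$, $\textsc{Enter}$, $\textsc{Leave}$) and for an applicability check $\sim\!\sigma$, the underlying expression is not altered at all, so $\mathit{unFocus}\,\varphi = \mathit{unFocus}\,\varphi'$ and equivalence follows by reflexivity; for any minor rule obtained by demoting a genuine rewrite rule, the preceding lemma applies verbatim. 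Chaining these one-step equivalences shows that $\mathit{unFocus}\,\varphi_0, \ldots, \mathit{unFocus}\,\varphi_n$ all lie in a single equivalence class, which is exactly the statement.

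The main obstacle I anticipate is the final case analysis on minor rules: the preceding lemma is phrased for exercise rewrite rules, so I must justify separately that the purely administrative minor rules (navigation, focus-tracking, and the applicability check) leave $\mathit{unFocus}\,\varphi$ invariant, so that they contribute reflexivity links rather than breaking the chain. Once every rule class is seen to be equivalence-preserving, the corollary is immediate from transitivity, and indeed the argument covers not only the big-step states $S_0,\ldots,S_n$ but every intermediate state traversed within the big steps as well.
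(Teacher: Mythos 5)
Your proposal is correct and follows exactly the route the paper intends: the statement is presented as a corollary of the preceding semantics-preservation lemma, and your argument — unfold the derivation into big steps, flatten into small steps, apply the lemma to each major rule, and close by transitivity of $\equiv$ — is that intended reduction (the paper itself gives no written proof, deferring details to its technical report). Your extra care in checking that the administrative minor rules and applicability checks leave the unfocussed expression unchanged is a detail the paper glosses over, and it is handled correctly.
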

\begin{lemma}
  Let \ensuremath{\Conid{E}} be the set of expressions generated by the exercise's generator. 
  All expressions \ensuremath{\Varid{e}\;\mathrel{\in}\;\Conid{E}} are suitable and not ready: \ensuremath{\Varid{suitable}\;(\Varid{e})\;\wedge\;\neg\;\Varid{ready}\;(\Varid{e})}.
\end{lemma}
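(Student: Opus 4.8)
The plan is to reduce the conjunction to a single condition by exploiting how the suitability and readiness predicates are defined, and then to discharge that condition against the generator. For the running power exercise the two predicates are \ensuremath{\Varid{suitablePower}\;\Varid{e}\mathrel{=}\Varid{normPower}\;\Varid{e}\not\simeq\Varid{e}} and \ensuremath{\Varid{readyPower}\;\Varid{e}\mathrel{=}\Varid{normPower}\;\Varid{e}\simeq\Varid{e}}. Since \ensuremath{\simeq} is (decidable, total) syntactic equality, the first step is to observe that these predicates are exact logical complements, so that \ensuremath{\Varid{suitable}\;(\Varid{e})\;\Leftrightarrow\;\neg\;\Varid{ready}\;(\Varid{e})}. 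This immediately makes the second conjunct redundant: the goal \ensuremath{\Varid{suitable}\;(\Varid{e})\;\wedge\;\neg\;\Varid{ready}\;(\Varid{e})} is equivalent to the single obligation \ensuremath{\Varid{suitable}\;(\Varid{e})}, i.e.\ that \ensuremath{\Varid{normPower}} strictly changes \ensuremath{\Varid{e}}.

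The remaining and essential step is to show that every \ensuremath{\Varid{e}\;\mathrel{\in}\;\Conid{E}} is non-normalised. Unlike the preceding lemmas, this does not follow from the semantics of Section~\ref{ssec:semantics}; it is a constraint on the construction of the generator. I would therefore fix the concrete generator, unfold its definition, and verify that for each expression in its range the normal form differs syntactically from the expression itself. For the power domain this amounts to guaranteeing that the generator always emits at least one reducible redex (a product \ensuremath{{\Varid{a}^{\Varid{x}}}\cdot{\Varid{a}^{\Varid{y}}}}, a nested power \ensuremath{{({\Varid{a}^{\Varid{x}}})^{\Varid{y}}}}, or a distributable \ensuremath{{(\Varid{a}\cdot\Varid{b})^{\Varid{x}}}}) that \ensuremath{\Varid{simplifyPower}} will rewrite. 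Since such a redex is rewritten by \ensuremath{\Varid{normPower}} to a syntactically distinct term, \ensuremath{\Varid{suitable}\;(\Varid{e})} follows.

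The main obstacle is that the generator is left unspecified in the exercise description (it is listed as an \emph{optional} component), so the lemma is really a proof obligation to be discharged once a particular generator is chosen. The complementarity argument of the first paragraph makes this cheap for the power exercise — a single check on the generator's range suffices — but for a domain where suitability and readiness are not exact complements the proof would not collapse in the same way, and one would instead have to verify both conjuncts independently against the generator's output.
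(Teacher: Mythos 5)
Your proposal is sound, but note that the paper itself contains no proof of this lemma: all five exercise properties in Section~\ref{sec:properties} are stated as proof obligations on a particular exercise description, with the proofs for the power exercise deferred to an accompanying technical report and justified only by the remark that ``the rules that are used in the strategy are the same as the rules used to calculate the normal form.'' Against that backdrop your argument is a reasonable reconstruction and in one respect sharper than the paper's own framing. The complementarity observation is correct: since \ensuremath{\Varid{suitablePower}\;\Varid{e}\mathrel{=}\Varid{normPower}\;\Varid{e}\not\simeq\Varid{e}} and \ensuremath{\Varid{readyPower}\;\Varid{e}\mathrel{=}\Varid{normPower}\;\Varid{e}\simeq\Varid{e}} with \ensuremath{\simeq} total and decidable, the conjunction collapses to the single obligation \ensuremath{\Varid{suitable}\;(\Varid{e})}, i.e.\ that the generator never emits a normal form. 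You are also right that this residual obligation cannot be discharged in the abstract: the generator is an optional, unspecified component of the exercise record, so the lemma is a requirement to be verified per generator (by checking that every expression in its range contains a \ensuremath{\Varid{simplifyPower}} redex), not a consequence of the semantics in Section~\ref{ssec:semantics}. Your closing caveat --- that for a domain where \ensuremath{\Varid{isSuitable}} and \ensuremath{\Varid{isReady}} are not exact complements both conjuncts must be checked independently --- is worth keeping, since the paper explicitly warns that the normalisation-based definitions of the power exercise need not carry over to other domains.
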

\begin{lemma}
  Let \ensuremath{\Conid{S}_{0}} be a state, \ensuremath{\Varid{onefirst}\;\Conid{S}_{0}\mathrel{=}(\Conid{S}_{1},\Varid{r}_{1})} and \ensuremath{\Varid{allfirsts}\;\Conid{S}_{0}\mathrel{=}\{\mskip1.5mu (\Conid{S}_{2},\Varid{r}_{2})\;\ldots\;(\mathit{S_n},\mathit{r_n})\mskip1.5mu\}}. Then \ensuremath{\Varid{r}_{1}\equiv \Varid{min}\;\{\mskip1.5mu \Varid{r}_{2}\;\ldots\;\mathit{r_n}\mskip1.5mu\}} with respect to the rule
  ordering.
\end{lemma}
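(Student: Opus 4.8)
The plan is to unfold the definitions of $\mathit{onefirst}$ and $\mathit{allfirsts}$ and observe that the claim is essentially definitional once the totality of the rule ordering is invoked. I would begin by noting the standing assumption that $\mathit{allfirsts}\ S_0$ is non-empty, so that $\mathit{onefirst}\ S_0$ is in fact defined; were it empty, $S_0$ would be an end state with no next step and the comparison would be vacuous.

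First I would unfold the definition of $\mathit{onefirst}$. By that definition the returned pair $(S_1, r_1)$ satisfies two properties simultaneously: it is itself a member of $\mathit{allfirsts}\ S_0$, and its rule component is a lower bound, that is $r_1 \leq r_i$ for every $(S_i, r_i) \in \mathit{allfirsts}\ S_0$. The first property is the crucial bookkeeping point: since $(S_1, r_1) \in \mathit{allfirsts}\ S_0 = \{(S_2, r_2) \ldots (S_n, r_n)\}$, the rule $r_1$ actually occurs among $\{r_2 \ldots r_n\}$, so the lower bound is attained within the set rather than merely bounding it from below.

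Next I would invoke the fact, introduced in Subsection~\ref{ssec:non-determinism}, that the rule ordering $\mathbin{<}$ is a total order. Totality guarantees that the finite non-empty set $\{r_2 \ldots r_n\}$ has a unique least element $\min\{r_2 \ldots r_n\}$. Combining this with the two observations above, namely that $r_1$ is a lower bound of the set and that $r_1$ lies in the set, yields that $r_1$ is precisely this least element. Antisymmetry of the total order then delivers $r_1 \equiv \min\{r_2 \ldots r_n\}$, where $\equiv$ on rules is read as equality with respect to the ordering.

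I do not expect a genuine obstacle, as the statement is immediate from the shape of the $\mathit{onefirst}$ definition; the work is entirely in making the quantifier and membership facts explicit. The one subtlety worth a remark is that $\mathit{allfirsts}\ S_0$ may contain several pairs sharing the same rule but differing in their state component, so that the \emph{pair} selected by $\mathit{onefirst}$ need not be unique. This does not affect the lemma, which concerns only the rule $r_1$: its minimality under the total order pins $r_1$ down uniquely even when the associated state is not uniquely determined.
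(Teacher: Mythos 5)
Your proof is correct: the claim follows directly from unfolding the definition of \emph{onefirst}, which already requires the returned pair to lie in \emph{allfirsts} and its rule to be a lower bound, so totality of the rule ordering pins down $r_1$ as the attained minimum. The paper does not print a proof for this lemma (it defers the exercise-property proofs to a technical report, remarking only that they are ``rather easy''), and your definitional argument is exactly the intended one, with the useful extra care about non-emptiness of \emph{allfirsts} and about distinct pairs sharing the same rule.
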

The proofs of these lemmas for the exercise for calculating with powers, are
rather easy. This is because the rules that are used in the strategy are the
same as the rules used to calculate the normal form of a power expression. We
provide the proofs for these lemmas in an accompanying technical
report~\cite{UUCS2010028}.

\section{Related Work}
\label{sec:related}
The strategy language on which our work is based is very similar to languages
that are used in parser libraries~\cite{SwieDupo96}. Some differences, such as
the usage of labels, the focus on intermediate answers, and the concept of minor
rules, make the existing libraries unsuitable for generating feedback.  

Our language is also similar to strategic programming languages such as
Stratego~\cite{VBT98} and Elan~\cite{borovansky01rewriting}, and tactic
languages used in theorem proving. We compare our approach with two more formal
approaches to strategies. Kaiser and L\"ammel construct a mechanised formal
model of Stratego~\cite{kaiser-09}. Our strategy language is different from
Stratego in the sense that we, in addition to the final term, also focus on the
intermediate rewrite steps. We do not focus on the development of a mechanised
model: instead, we give a formal definition of our feedback services that use
our strategy language. Tacticals, proof plans and methods~\cite{bundyproofplans}
are used to automatically prove theorems. On an abstract level, these plans and
methods play the same role as strategies: we can view a strategy as a proof plan
for proving that an answer is the solution to an exercise. Aspinall et
al.~\cite{aspinall} introduce the tactic language Hitac that can be used to
construct hierarchical proofs, so called hiproofs. To evaluate Hitac programs
two semantics are given: a big step semantic that captures the intended meaning
and a small step semantic that covers the details of the proof. As far as we
found, the tactic language is not used to generate feedback, or to recognise
proving steps made by a student. Moreover, we provide a set of services that
enables learning environments to access our functionality.

\section{Conclusions}
\label{sec:conclusions}
In this paper we have presented a formal and precise definition of the main
concepts that we use to construct semantically rich feedback in learning
environments. In addition to a precise definition of our strategy language we
define a step, and a big step relation. These relations give the semantics of
the strategy language. Feedback services are an interface to our feedback
functionality that can be used by learning environments. These services are
expressed in terms of the big step relation.

Our formalisation gives us more confidence in the design choices we have made.
Furthermore, we can now validate our current implementation using the properties
we state. In the future we want to extend the work in this paper by providing
proofs for other domains, such as linear algebra and propositional logic.

\providecommand{\bibfont}{\small}

\label{sec:bib}
\bibliographystyle{eptcs}
\bibliography{StrategyProperties}

\end{document}